\def\checkmark{\tikz\fill[scale=0.4](0,.35) -- (.25,0) -- (1,.7) -- (.25,.15) -- cycle;}
\theoremstyle{remark}
\newtheorem{remark}{Remark}[section]
\theoremstyle{plain}
\newtheorem{corollary}{Corollary}
\newtheorem{proposition}{Proposition}
\newtheorem*{otherthm}{Theorem}
\theoremstyle{definition}
\newtheorem{assumption}{Assumption}[section]
\theoremstyle{definition}
\newtheorem{example}{Example}[section]
\theoremstyle{definition}
\newtheorem{definition}{Definition}
\newcommand{\indep}{\raisebox{0.05em}{\rotatebox[origin=c]{90}{$\models$}}}
\renewenvironment{abstract}
 {\small
  \begin{center}
  \bfseries \abstractname\vspace{-.5em}\vspace{0pt}
  \end{center}
  \list{}{%
    \setlength{\leftmargin}{5mm}
    \setlength{\rightmargin}{\leftmargin}%
  }%
  \item\relax}
 {\endlist}
\def\spacingset#1{\renewcommand{\baselinestretch}%
{#1}\small\normalsize} \spacingset{1}
 \title{\vspace{-5ex} \bf \Large Estimating Structural Target 
 Functions using Machine Learning and Influence Functions}
\author{\normalsize{Alicia Curth}\thanks{Working paper.
	The results presented in this paper are part of research conducted by the first author (AC) for a dissertation submitted in partial fulfilment of the degree of Master of Science in Statistical Science at the Department of Statistics of the University of Oxford in September 2020. This work was conducted during a research internship of AC in the vanderschaar-lab at the University of Cambridge led by MvdS, and AC was supervised by MvdS and advised by AMA, postdoc with MvdS. AC is now a PhD student with MvdS. Correspondence to AC: amc253@cam.ac.uk }\\
    \normalsize University of Oxford\\ \vspace{-3ex}
    \normalsize University of Cambridge
    \and
    \normalsize Ahmed M. Alaa \\ 
    \normalsize UCLA 
    \and
    \normalsize Mihaela van der Schaar \\
    \normalsize University of
Cambridge\\
    \normalsize UCLA\\
    \normalsize The Alan Turing Institute
}
\date{\vspace{-5ex}}
\begin{document}
\maketitle

\begin{abstract}
We aim to construct a class of learning algorithms that are of practical value to applied researchers in fields such as biostatistics, epidemiology and econometrics, where the need to learn from incompletely observed information is ubiquitous. We propose a new framework for statistical machine learning of target functions arising as identifiable functionals from statistical models, which we call `IF-learning' due to its reliance on influence functions (IFs). This framework is problem- and model-agnostic and can be used to estimate a broad variety of target parameters of interest in applied statistics: we can consider any target function for which an IF of a population-averaged version exists in analytic form. Throughout, we put particular focus on so-called coarsening at random/doubly robust problems with partially unobserved information. This includes problems such as treatment effect estimation and inference in the presence of missing outcome data.
Within this framework, we propose two general learning algorithms that build on the idea of nonparametric plug-in bias removal via IFs: the `IF-learner' which uses pseudo-outcomes motivated by uncentered IFs for regression in large samples and outputs entire target functions without confidence bands, and the `Group-IF-learner', which outputs only approximations to a function but can give confidence estimates if sufficient information on coarsening mechanisms is available.  We apply both in a simulation study on inferring treatment effects.
\end{abstract}
\noindent%
{\it Keywords:} Counterfactual Inference, Causal Inference, Missing Outcomes, Nonparametric Regression, Double Robustness, Efficient Estimation, Treatment Effects
\newpage
\spacingset{1.15} 
\section{Introduction}
Machine learning is increasingly transitioning from being a tool for prediction to taking over problems of interest classically within the domain of statistics, where the core focus is on \textit{estimation} and \textit{inference} instead of prediction. Because of the  inherent flexibility of machine learning models and their data-adaptive nature, this has opened up completely new possibilities for estimating possibly very complex structural target functions, instead of focusing only on low-dimensional parameters such as population averages. Arguably the most advanced example of this is the area of causal inference. In the last 10 years, there have been substantial advances in the development of methods for data-adaptive heterogeneous treatment effect estimation from both experimental and observational data using machine learning (see e.g. \citet{bica2020real} for a comprehensive overview). However, with some notable exceptions (e.g. \citet{alaa2018limits, Athey2016, Athey2019,Chernozhukov2018, kennedy2020optimal, wager2018estimation}) theoretical results making statistical guarantees for estimation and enabling inference are mostly lacking -- as is the case for most areas of machine learning \citep{Chernozhukov2018}.

Data-adaptive, machine-learning-based estimators have potential applications in many fields relying on applied statistics to empirically determine the effects of interventions, policies and treatments, and could be used to shift the focus from average treatment effect estimation towards flexibly investigating heterogeneity of effects across populations. One such potential application, which motivates the authors of this paper, is the move towards more personalised medicine and healthcare. In particular, the discovery of heterogeneous treatment effects in clinical trials could improve both the ability to treat patients and increase mechanistic understanding of underlying diseases. Clinical trials generally take place in small sample regimes, so it is important that estimators be \textit{efficient} in their use of data. Additionally, regulatory agencies require the quantification of \textit{statistical significance} of findings and possible \textit{bias}. If these factors can be accounted for, machine learning has immense potential to change the nature of clinical trials \citep{zame2020machine}. There are other applications in healthcare, where given that we can make guarantees on \textit{consistency} or worst-case performance, it might be enough to give personalised treatment recommendations because models have been trained on large (observational) data sets. Both types of settings -- learning from small experimental samples or large observational samples -- require at least some ability to give statistical guarantees on the performance of a learning algorithm. 

While the problem of heterogeneous treatment effect estimation has received considerable attention in related literature over the last few years, we take the stance that it is no different than most other problems in applied statistics. As long as a parameter is \textit{identifiable} from observed data and well-defined in terms of the underlying (unknown) statistical model, statistical estimation of any parameter has the same structure and inherent problems. This indicates that we can separate the issues of identification and estimation completely \citep{van2011targeted}. While it is thus necessary to carefully consider conditions under which treatment effect estimates can be interpreted as causal (such as those developed within the Neyman-Rubin Potential Outcomes  framework (\cite{neyman1923applications}, \citet{rubin1978bayesian}), or the graphical approach coined by \cite{pearl2009causality}),  from a statistical viewpoint there is, strictly speaking, no need for a separate statistical literature on treatment effect \textit{estimation} when identifiability is assumed.

With this in mind, we maintain a very general problem set-up in this paper and take a semiparametric, essentially assumption-free approach to estimating functions of continuous inputs -- arising as identifiable functionals from statistical models -- using plug-in estimation. On the one hand, this class of functions includes more well-studied parameters such as the conditional mean and the conditional outcome probability in standard regression and binary classification problems, respectively. On the other hand, it also includes problems where data is \textit{coarsened at random}, which have interesting \textit{doubly robust} structure that we can exploit for efficient and robust estimation. Such coarsening occurs in many scenarios of practical interest to empirical researchers in biostatistics, epidemiology and econometrics \citep{rubin2008empirical} and typical examples include estimation of causal parameters such as the conditional average treatment effect (CATE) and estimation of the conditional mean of an outcome when it is missing at random or censored at random. With particular focus on such coarsening problems, we aim to generally characterise the fundamental limits of nonparametric estimation and inference when estimating entire target functions using a generic machine learning method.

We do so by extending the standard notion of plug-in estimation and plug-in bias correction via influence functions (IFs), which are functional derivatives of statistical target parameters, from low-dimensional (see e.g. \cite{robins2017minimax}) to infinite-dimensional parameters. We propose a statistical learning framework we call `IF-learning', which is based on learning target functions from observed data using IFs and pseudo-outcomes motivated by IFs. We present two general learning algorithms that leverage these ideas -- the `IF-learner' and the `Group-IF-learner' -- which are best suited to learn in large and small sample regimes, respectively. After reviewing and developing necessary theory and algorithms, we finally return to our motivating example of inferring CATE and other causal parameters from experimental and observational data with a simulation study.

\subsection{Related work}
This paper builds on a straightforward question related to plug-in estimation. As we will discuss in detail in section \ref{review}, plug-in estimation exploits that target parameters of interest in applied statistics, such as (conditional) means can almost always be written as a functional of the underlying statistical model \citep{van2011targeted}. Since the statistical model is unknown, we can \textit{plug in} a data-driven estimate of this model to estimate a target parameter. Years of exciting research by particularly van der Laan and colleagues (see e.g. the exhaustive overviews in \citet{van2011targeted} and \citet{van2018targeted}) have established the idea that machine learning methods are naturally well-suited for this task. Unfortunately, plug-in estimation can lead to plug-in bias because the target functional is evaluated at the \textit{wrong} model, and the targeted maximum likelihood estimation (TMLE) literature (starting with \cite{van2006targeted}), as well as Robins and colleagues (e.g. in \citet{robins2017minimax}) have derived bias correction procedures, intuitively similar to Newton-Raphson steps, based on efficient scores and efficient influence functions (EIFs), that can be applied when target parameters are low-dimensional. 

Using TMLEs, recent work presented in \cite{van2018cv} has provided new ideas for infinite-dimensional pointwise plug-in bias adjustment based on kernel smoothers. While this allows the authors to make model-specific, problem-generic guarantees, the proposed approach is limited to the use of nonparametric  kernel smoothers and requires familiarity with the specifics of the TMLE framework. We are motivated by the intuitive appeal of constructing general `off-the-shelf' machine learning-based plug-in estimators for generic target functions, whose estimates can then be bias-corrected using influence functions. As we discuss later, we believe that the simple intuition of plug-in bias correction via influence functions -- a procedure similar to a Newton-Raphson Step -- could be of great intuitive appeal for practitioners (which is also the topic of the recent paper \cite{fisher2020visually}).

\begin{table}[!htb]
\begin{threeparttable}
\singlespacing
\centering
\setlength\tabcolsep{2pt}
\begin{tabular}{l|lllll}
\toprule
\midrule
Approach           & \begin{tabular}[c]{@{}l@{}}Infinite-\\ dim.\end{tabular} & Problem     & Model   & \begin{tabular}[c]{@{}l@{}}Based\\ on\end{tabular} & Inference                    \\
\toprule
\midrule
    \cite{robins2017minimax}        &                                                         & Generic     & Generic & EIF                                                & \checkmark                               \\
    \cite{chernozhukov2018double} & & Generic & Generic & NO &  \checkmark\\
    \cite{wager2018estimation}              & \checkmark                               & CATE        & Forest  &                                                  EE  & \checkmark\\
\cite{nie2017quasi}               & \checkmark                               & CATE        & Generic & Loss                                               &                                                                                      \\
\cite{lee2017doubly}                & \checkmark$^*$                               & CATE        &     Local-linear    & EIF                                                 &     \checkmark                          \\ 
\cite{fan2019estimation} & \checkmark$^*$ & CATE & Local-linear & NO  & \checkmark 
\\
\cite{kennedy2020optimal}            & \checkmark                               & CATE        & Generic & EIF                                                 &                               \\
\cite{Athey2019}              & \checkmark                               & Generic     & Forest  & EE                                                 &                                                       \checkmark \\
\cite{Chernozhukov2018}            & \checkmark$^*$                           & Generic$^*$ & Generic & EE                                                                                                        & \checkmark \\
\cite{foster2019orthogonal}             & \checkmark                               & Generic       &    Generic     & NO                                                  &                               \\
\cite{semenova2017estimation}           & \checkmark$^*$                           & Generic     & LS Series  & NO                                                  &     \checkmark                                                    \\
\cite{van2018cv}             & \checkmark                               & Generic     & Smoothing  & TMLE                                              & \checkmark \\
IF-learning (ours) & \checkmark                               & Generic     & Generic & EIF                                                                             & \checkmark \\
\midrule
\bottomrule
\end{tabular}
\begin{tablenotes}[para,flushleft]
\begin{footnotesize}
`Infinite-dim.' denotes whether the target parameter is infinite dimensional.  The column `inference' indicates whether the authors attempted to characterise conditions under which standard inference, e.g. based on a central limit theorem, is possible.  $^*$ denotes with additional qualifications/assumptions. CATE denotes Conditional Average Treatment effect. LS denotes Least Squares. We use the following abbreviations for concepts that methods are based on: Estimating Equations (EE), Efficient Influence Function (EIF), Loss-based (Loss), Neyman-Orthogonality (NO), Targeted Maximum Likelihood Estimation (TMLE). 
\end{footnotesize}
\end{tablenotes}
\end{threeparttable}
\caption{Conceptual overview of existing approaches for estimation and inference on infinite-dimensional structural target parameters and their capabilities}
\end{table}

Apart from using plug-in models and influence functions to construct estimators of target parameters, there exist multiple other strategies to do so. We broadly classify popular approaches into relying on  estimating equations, loss functions and/or Neyman-Orthogonality, and note that estimators can fall in multiple of these categories.  Based on some of these strategies, estimation of generic target functions has recently received increased attention in the fields of econometrics (e.g. \cite{Athey2019}, \cite{Chernozhukov2018}, \cite{foster2019orthogonal}, \cite{semenova2017estimation}) and biostatistics (e.g. \cite{van2018cv}), and, for the special case of CATE, particularly so in the causal inference communities (e.g. \cite{fan2019estimation}, \cite{kennedy2020optimal}, \cite{lee2017doubly},  \cite{nie2017quasi} and \cite{wager2018estimation}). Table 1 contains a schematic overview of the capabilities of approaches in related literature.  A more detailed literature review can be found in Appendix \ref{litreview} and throughout later sections.

\subsection{Outlook: IF-learning}
Complementary to some of the approaches in related literature, this paper was primarily motivated by a very simple question: \textit{How can we correct for plug-in bias that arises when we use a generic machine learning method for plug-in estimation of an infinite-dimensional structural target functional -- a function -- using influence functions?} The very unsatisfying technical answer to this question is: we cannot, because the (E)IFs of infinite-dimensional parameters do not exist. But the intuitive answer a machine learner would give, which lead to this paper, is a different one: we might not be able to correct for the plug-in bias \textit{exactly}, but we can \textit{approximately} do it. We propose two approaches to do so in this paper, leading to a statistical machine learning framework we call `IF-learning'.  Our main proposal is the `IF-learner', which, instead of using the true EIF (which does not exist) for plug-in bias adjustment, uses pseudo-outcome regression with outcomes motivated by the expression for the EIF, approximately adjusting entire functions for pointwise first-order plug-in bias. The `Group-IF-learner', on the other hand, does not learn the entire target function, but a group-wise approximation, for which EIFs do exist. 

The distinction between these two approaches -- learning a full function versus a coarser approximation -- is also motivated by a discussion in \cite{Chernozhukov2018}: even in arguably the simplest case, standard nonparametric regression, \cite{stone1980optimal}'s minimax convergence rate highlights that there exists no minimax consistent estimator in general if the dimension $d$ increases with sample size $n$, e.g. $d \geq \log(n)$\citep{Chernozhukov2018}. Further, the finite sample performance of nonparametric estimators deteriorates rapidly with increasing dimension $d$ for $n$ fixed. Inference on generic nonparametric function estimates is even more difficult -- in fact, adaptive confidence sets do not exist even for low dimensional nonparametric problems \citep{genovese2008adaptive}, as bias tends to dominate sampling error \citep{Chernozhukov2018}. This highlights that we have to prioritize between different goals in any case:  we cannot have the ability to estimate entire functions, the ability to perform inference on entire functions, the ability to consider high-dimensional data, and the ability to make no assumptions all simultaneously. 

Which of these abilities we are willing to give up may highly depend on the context. For example, when estimating treatment effects of new drugs in RCTs with small sample sizes, it might be much more important to retain the ability to perform valid statistical inference than to make individualised treatment recommendations, while the opposite may be true when building an in-hospital decision support system for daily practice trained on a large observational data set. The two settings -- access to experimental and observational data --  differ not only in the amount of available data (sample size), but also in the amount of information available to the statistician: In experimental settings, some features of underlying the statistical model are more likely to be \textit{known}, transforming fully nonparametric problems to proper semiparametric problems. For example, in RCTs the exact propensity score may be known, which is less likely to be the case in observational studies. The two learning algorithms we consider are each well-suited for one of these scenarios:
\begin{enumerate}
\item  The `IF-learner' for low-information, high sample size settings, in which we have access to large amounts of possibly low-quality data (e.g. unknown extent of selection on observables in treatment effect estimation), and rely on the assumption that the data-set is large enough such that finite sample bias is negligible. Using pseudo-outcome regression, it outputs an estimate of a full function that allows for individualised predictions -- for which statistical guarantees cannot be given beyond a minimax convergence rate, unless stricter assumptions are made. In this setting, we are bound to using data-sets that are not `too high-dimensional', i.e. $d < log(n)$. While it is motivated from the perspective of high-dimensional plug-in bias correction, this algorithm can also be seen as a generalization of \cite{kennedy2020optimal}'s CATE estimator to a much broader class of target parameters.
\item The `Group-IF-learner' for high-information, low sample size settings, in which we have access to small amounts of high-quality data, e.g. from a RCT in which propensity scores are \textit{known}. There, we cannot rely on asymptotic rate results only, however, due to information on e.g. selection mechanisms, we can obtain unbiased estimates for which standard inference is possible if we focus on a lower-dimensional approximation of our target. This algorithm outputs group-averaged target estimates with standard confidence intervals for data-adaptively determined heterogeneity groups, and is an adaptation of \cite{Chernozhukov2018}'s GATES algorithm.
\end{enumerate}

\paragraph{Contributions}
We aim to contribute to the growing literature applying traditional ideas from statistics to machine learning with the ultimate goal to facilitate the usage of flexible and data-adaptive methods in areas that are currently still restricted to the focus on simple population averages due to the  need to quantify estimation uncertainty and bias.  We hope to do so in this paper by highlighting the inherent usefulness of the concept of the EIF (even if it does not exist exactly) to characterise the ability of machine learning models (or data-adaptive nonparametric models more generally) to learn structural target functions of high interest in fields such as biostatistics and econometrics. We summarize our main contributions as follows: 
\begin{enumerate}
\item We propose `IF-learning', a statistical machine learning framework for learning structural target functions using `off-the-shelf' machine learning methods. Within this framework, we propose two first learning algorithms, the `IF-learner' and the `Group-IF-learner', suited to learn in high- and low-sample size settings, respectively. We hope that these very general algorithms can serve as templates for many problems of practical interest, in particular for problems with coarsening at random structure. Therefore, we also provide a `sklearn-style' python implementation of our algorithms\footnote{A preliminary version is available at \url{https://github.com/AliciaCurth/IF-learn}}.
\item While developing the `IF-learner', we propose to extend (i) the notion of plug-in bias removal and (ii) the notion of semiparametrically efficient estimation to entire target functions (for which EIFs do not exist) by performing pseudo-outcome regression using outcomes motivated by the population EIF, and show that learning target  functions in this manner requires no additional assumptions except for those associated with the regression method of choice. We also characterise the fundamental limits of this approach in fully nonparametric settings.
\item Through our theoretical analyses, we provide insights into the fundamental limits of learning structural target functions arising in a wide range of problems of interest in applied statistics --  namely problems for which the EIF of a population-averaged version of the target parameter does exist in analytic form. While \cite{stone1980optimal}'s nonparametric minimax rate remains the best that we can do to estimate \textit{any} function without further assumptions, our approach allows to re-characterise problems that are normally considered \textit{harder} than simple regression due to incomplete information and investigate how their difficulty compares to regression. A particularly interesting result of this investigation is that our debiased plug-in estimators in coarsening at random problems can achieve oracle rates whenever sufficient information about the coarsening mechanism or other parts of the problem is available, which means that incorporating domain knowledge into machine learning can substantially lower the bar on data requirements. We also illustrate our findings in a simulation study on treatment effect estimation.
\end{enumerate}

\paragraph{Structure of the paper}
We proceed as follows: In the following section we formalize the problem setting that we consider. Sections 3 and 4 are very theoretical in nature, and are intended to anchor our proposals in classical ideas from semiparametric statistics. Nonetheless, we give specific examples throughout to highlight practical implications. In Section \ref{review} we briefly review key concepts from semiparametric statistics that are necessary for general understanding of the origins of our proposals. Section 4 presents our key theoretical arguments, and highlights some first implications of combining our proposals with existing theoretical results. Sections 5 and 6 are more applied in nature: Section 5 constructs the proposed learning algorithms, and  Section \ref{sectionrct} contains a simulation study on estimation of different causal parameters. Section 7 concludes and highlights avenues of future research within our framework.

\section{Problem definition}
We discuss a very general problem setting covering many problems of interest in fields such as biostatistics and econometrics. By preserving this generality we can cover both well-studied standard problems, such as inference on a conditional mean on the one hand, and more intricate problems, such as treatment effect estimation, estimation under missing data and censoring, on the other. Throughout, we give examples to illustrate otherwise abstract concepts, with a focus on treatment effect estimation, the main motivating example of this paper.

Generally, we assume that we observe a sample of size $n$ of observations $O\sim \mathbbm{P}_0 \in \mathcal{P}$, where $O$ usually contains some $d$-dimensional covariate information $X \in \mathcal{X}$,  some outcome information $Y^* \in \mathcal{Y}$ and possibly other variables (see below). The statistical model $\mathbbm{P}_0$ is an element of $\mathcal{P}$, which -- without further restrictions -- contains all probability distributions over $O$ inducing a nonparametric estimation problem, and can be transformed to a proper semiparametric problem by imposing restrictions on its elements (usually guided by domain knowledge on data-generating processes). Our target parameter is an infinite-dimensional functional $\psi \equiv \psi(\mathbbm{P}_0)$ of the underlying statistical model $\mathbbm{P}_0$, most often a function of the form  $\psi(\cdot) \equiv \psi(\mathbbm{P}_0) (\cdot) : \mathcal{X} \rightarrow \mathbbm{R}$. 

In the class of \textit{coarsening at random} problems we are most interested in, we assume some additional structure (we follow the set-up in \cite{rubin2008empirical}). We assume that there exists a true underlying statistical model $F_0 \in \mathcal{F}$ generating data $Z=(Y^*, X)$. $Y^*$ denotes a (possibly multivariate) outcome variable, and $X \in \mathcal{X} \subset \mathbbm{R}^d$ contains characteristics associated with an observation.  Within this model, our interest lies in making inferences on a structural target function, $\psi: \mathcal{X} \rightarrow \mathbbm{R}$,  of the form
\begin{equation}\label{psiform}
\psi(x)=\mathbbm{E}_{{F}_0}[h(Y^*)|X=x]
\end{equation}
i.e. the conditional expectation of a function $h(\cdot)$ of the outcome variable $Y^*$.

In most scenarios of main interest, we do not fully observe $Y^*$, but instead obtain information only on a coarsened variable  $Y = \mathcal{C}(Y^*, C)$ . Here, $C$ is a coarsening variable determining what we observe and the mapping $\mathcal{C}$ is deterministic. Such coarsening mechanisms are prevalent in settings involving, for example, counterfactual inference, missing data or censoring. In high information settings (e.g. learning from experimental data), we assume that the stochastic coarsening mechanism $C|X \sim G_0(\cdot)$ is known or easily estimable, while in low information settings (e.g. learning from observational data) we leave it completely unspecified.

 Thus, instead of $\{Z_i\}^{n}_{i=1} = \{(Y^*_i, X_i)\}^{n}_{i=1}$ we observe a random sample $\{O_i\}^{n}_{i=1} = \{(Y_i, X_i, C_i)\}^{n}_{i=1}$, where $Y_i$ is the observed coarsened variable (and hence could also contain \textit{no} information), drawn i.i.d. from a probability distribution $O \sim \mathbbm{P}_0(\cdot) = \mathbbm{P}_{F_0, G_0}(\cdot)$, determined by the statistical model $F_0$ and the coarsening mechanism $G_0$. While we do not observe $Y^*$, we make the assumption that we can nonetheless \textit{identify} $\psi(x)$ from the observed data. We thus consider problems in which we can construct signals $\tilde{Y}=f(O)$, such that $\mathbbm{E}_{\mathbbm{P}_0}[\tilde{Y}|X=x]=\psi(x)$ (at least asymptotically). Fortunately, as discussed above, once identification of the target parameter is guaranteed, all remaining problems are statistical in nature and concern only estimation of  the target.
 
Below, we give some examples to illustrate typical target parameters with such structure. We begin with standard regression and classification problems, because they are trivial examples of coarsening at random problems (\textit{no} coarsening). Further, the main motivating example in this paper, which is also discussed in more depth in section \ref{sectionrct}, is discovering evidence for heterogeneous treatment effects from experimental and observational studies. We maintain this example throughout as it is not only well-known and of high interest in many communities but also allows us to highlight the value of having knowledge of the coarsening mechanism (here: treatment assignment mechanism). Finally, we give two more examples that we will not treat in detail but present to illustrate the breadth of problems this framework can cover.

\begin{example}[Conditional means in standard regression and binary classification]
The most simple example for a target function is the conditional mean $\psi(x)=\mathbbm{E}_{\mathbbm{P}_0}[Y|X=x]$ in nonparametric regression, making no assumptions on the data-generating process. Here, $O_i=Z_i=(Y_i, X_i)$, since there is no coarsening, i.e. $h(Y^*)=Y \text{ and } \tilde{Y}=Y$. The same holds for the conditional success probability  $\psi(x) = \mathbbm{P}_0(Y=1|X=x) = \mathbbm{E}_{\mathbbm{P}_0}[\mathbbm{1}\{Y=1\}|X=x]$, the target parameter in binary classification.
\end{example}

\begin{example}[Motivating example: Conditional average treatment effects]
Given a binary treatment $W \in \{0, 1\}$, assigned according to propensity score $\pi(x) = \mathbbm{P}_0(W=1| X=x)$ -- which is assumed known in experimental settings -- we are interested in an individualised treatment effect: the difference between the potential outcomes $Y_i(0)$ if individual $i$ does not receive treatment ($W_i=0$) and $Y_i(1)$ if treatment is administered ($W_i=1$). If we had access to both potential outcomes, then the individual treatment effect $Y_i(1) - Y_i(0)$ would be a natural outcome of interest. However, by the \textit{fundamental problem of causal inference}, we only ever observe one of the two potential outcomes. Therefore, the majority of existing literature focuses on the conditional average treatment effect (CATE), $\tau(x) = \mathbbm{E}_{\mathbbm{P}_0}[Y(1) - Y(0) |X =x]$, the expected treatment effect for an individual with covariate values $X=x$. 

In this case, $Y^* = (Y(0), Y(1))$, $h(Y^*)= Y(1) - Y(0)$ and $\psi(x)=\tau(x)$. We observe only $Y= W Y{(1)} + (1- W) Y{(0)}$, the potential outcome associated with the received treatment, so the treatment indicator $W$ acts as the coarsening variable $C$.  Under some additional assumptions (discussed in section \ref{sectionrct}), we can construct an unbiased Horvitz-Thompson-type signal $\tilde{Y}=(\frac{W}{\pi(X)} - \frac{(1-W)}{1-\pi(X)})Y$ \citep{horvitz1952generalization} for the estimand of interest from the observed data -- which, as we will illustrate later, is not the best way of estimating the CATE.
\end{example}

\begin{example}[Other causal parameters]
While much of the causal inference literature in machine learning concentrates on (C)ATE, it is not actually the only parameter of practical interest \citep{rubin2008empirical}. For example, if we let $\mu_0(x)$ and $\mu_1(x)$ denote the expected values of the potential outcomes, and the potential outcomes are binary, i.e. $Y(w) \in \{0, 1\}$, then the relative risk $f_{RR}(\mu_0(x), \mu_1(x))=\frac{\mu_1(x)}{\mu_0(x)}$ or the odds ratio $f_{OR}(\mu_0(x), \mu_1(x))= \frac{\mu_1(x)}{1-\mu_1(x)}/ \frac{\mu_0(x)}{1-\mu_0(x)}$ are often the parameters of natural interest in Randomized Control Trials (RCTs). In section \ref{sectionrct} we consider estimating such parameters using the transformation $f(\psi(x))$. 

\end{example}
\begin{example}[Conditional expectations when outcomes are missing at random]
Another well-known example of coarsening at random is that of missing outcome data (e.g. in a clinical trial). Here, we observe $O_i = (A_i Y_i, A_i, X_i)$, where $A_i\in \{0,1\}$ is a missingness indicator and $Z_i=(Y_i, X_i)$. If we know that data is missing at random, i.e. the probability of missingness $\pi(x) = \mathbbm{P}_0(A=1|X=x)$ is determined only by covariate information, then $\psi(x) = \mathbbm{E}_{\mathbbm{P}_0}[Y| A=1, X=x]$, the expected outcome of those subjects with missing data, would be the functional of interest. Clearly, this problem has a structure very similar to that of treatment effect estimation, and also admits a Horvitz-Thompson-type signal $\tilde{Y}$. A similar argument holds also for the expected outcome value of censored outcomes.
\end{example}

We approach the problem from a semiparametric statistics viewpoint, because we wish to make little to no assumptions on the data generating distribution $\mathbbm{P}_0$, which in reality could be arbitrarily complex. We wish to impose (parametric) restrictions only when we are sure that they are a feature of the underlying problem -- which could be the case if domain knowledge on the problem at hand is available. In this paper, we consider only fully nonparametric settings.  Our interest in $\mathbbm{P}_0$ is purely motivated by the target functional 
$\psi = \psi(\mathbbm{P}_0)$ -- all other components of $\mathbbm{P}_0$ are \textit{nuisance parameters} to us. Some of these nuisance parameters are more important for estimation of $\psi$ than others, sometimes we refer to those parameters that are needed for (efficient) inference on $\psi$ as $Q \equiv Q(\mathbbm{P}_0)$. For example,  $Q$ often includes parameters such as conditional mean functions and coarsening probabilities, while the distribution of zero-mean error terms can often be completely ignored. 

In classical semiparametric statistics, the target parameter is typically very low-dimensional -- e.g. a one-dimensional population average such as the average treatment effect. In stark contrast to this, recall that our target parameter is itself a function $\psi: \mathcal{X} \rightarrow \mathbbm{R}$ and hence infinite-dimensional. We assume continuous covariates throughout, yet our approach is also of practical use if covariates are discrete but high dimensional.  In the settings that we consider, $\Psi = \mathbbm{E}_{X \sim \mathbbm{P}_0}[\psi(X)]$, the population average of our target parameters, has been well-studied. Estimation of and inference on $\psi$ itself is still very much a topic of active research. 

In this paper, we aim to characterise the limits of estimating general structural target functions $\psi(x)$ using generic nonparametric machine learning methods with minimax performance guarantees, and use our findings to construct learning algorithms that are of practical value to applied researchers. To enable consistent estimation of high-dimensional functionals in settings where not all data is observed, we first need to develop simple strategies for plug-in bias removal based on pseudo-outcome regression. This is the main focus in the theoretical part of this paper (Sections \ref{review} and \ref{sectheor}). We also rely on semiparametric efficiency theory to construct estimators that are approximately pointwise efficient. For both, we make heavy use of the efficient influence functions of target parameters, a concept that we will define in the following section.

\section{Theoretical background}\label{review}
We begin by briefly reviewing key concepts and strategies used in semiparametric statistics to construct estimates of low-dimensional target parameters. In the remainder of this paper, we will build on and generalise these ideas to high-dimensional target parameters. We refer the reader to \cite{van2014higher}, \cite{kennedy2016semiparametric}, \cite{kennedypres}  and \cite{fisher2020visually} for excellent introductions to influence functions and related concepts, and how these arise in semiparametric statistics. For more exhaustive treatment of existing approaches in semiparametric statistics, tailored mainly to applications in biostatistics,  we refer to \cite{tsiatis2007semiparametric}, \cite{van2003unified} and \cite{van2011targeted}.

The concepts of main interest in this paper are the influence functions (IFs) and, in particular, the efficient influence function (EIF) of a target parameter. Influence functions arise naturally in multiple areas of statistics, most prominently in the area of robust statistics, where the \textit{influence function of an estimator} originally measures the robustness of an estimator to outliers \citep{hampel1980robust}. This is \textit{not} the use case of interest for influence functions in this paper. Instead, we build on \textit{influence functions of target parameters} as they are used in semiparametric statistics, namely in the context of (i) plug-in estimation and plug-in bias correction and (ii) characterisation of asymptotically efficient estimators. We will first introduce influence functions generally, and then discuss these two aspects in turn. As is standard in most of the literature, unless stated otherwise, we will assume in this section the one-dimensional problem in which our target is a population average $\Psi \equiv \Psi(\mathbbm{P}_0) = \mathbbm{E}_{X \sim \mathbbm{P}_0}[\psi(X)]$.

\subsection{Influence functions}
It is beyond the scope of this paper to explain how IFs were originally derived using tangent spaces, for a comprehensive introduction we refer to \cite{tsiatis2007semiparametric}. Instead, we give an intuitive introduction relating IFs of statistical parameters to derivatives of analytical functions, inspired by the discussion in \cite{fisher2020visually}. 

For a distribution $\mathbbm{P}$ with density $p$ we can define a distribution  $\mathbbm{P}_\epsilon$ with density $p_\epsilon$ given by
\begin{equation}
p_\epsilon(o) = (1 - \epsilon)p(o) + \epsilon \tilde{p}(o)
\end{equation}
with $\tilde{p}(o)$ the density of another distribution $\tilde{\mathbbm{P}}$  and $\epsilon$ small. Since our target parameter $\Psi(\mathbbm{P})$ is a functional of a distribution, we can use $\mathbbm{P}_\epsilon$  to evaluate the sensitivity of our target $\Psi$ to small changes along a path $\{\mathbbm{P}_\epsilon\}_{\epsilon \in [0,1]}$ where $\epsilon$ at the end points $0$ and $1$ reduces   $\mathbbm{P}_\epsilon$ to the original distributions $\mathbbm{P}$ and  $\tilde{\mathbbm{P}}$, respectively. 

Now assume we have a given \textit{plug-in} model $\tilde{\mathbbm{P}}$ for  $\mathbbm{P}$ and we would like to use the intuition of the path along small changes $\epsilon$ defined above to correct the bias induced by evaluating $\Psi$ at the \textit{wrong} distribution $\tilde{\mathbbm{P}}$ instead of $\mathbbm{P}$. If $\Psi$ was an analytical function, we would do so by using a Newton-Raphson step or a first-order Taylor-expansion. It turns out that plug-in bias correction for functionals can be handled using the exact same idea, and we only need a functional generalization of a derivative to essentially use the same procedures. This \textit{is} the IF. 

\begin{definition}[Influence Function (adapted from \cite{fisher2020visually})]
For a given functional $\Psi$, an influence function for $\Psi$ is any function  $\dot{\Psi}$  satisfying
\begin{equation}
\frac{\partial \Psi(\mathbbm{P} + \epsilon(\tilde{\mathbbm{P}}-\mathbbm{P}))}{\partial \epsilon}\Bigr|_{\epsilon=0} = \int \dot{\Psi}(O, \mathbbm{P}) (d\tilde{\mathbbm{P}} - d \mathbbm{P})
\end{equation}
and  
\begin{equation}\label{ifunbiased}
\int \dot{\Psi}(O, \mathbbm{P}) d \mathbbm{P} = \int (D_{\Psi, \mathbbm{P}}(O) - \Psi) d \mathbbm{P} = 0
\end{equation}
\end{definition}
Property (\ref{ifunbiased}) implies that the \textit{uncentered} influence function $D_{\Psi, \mathbbm{P}}(O)$ is unbiased for $\Psi$, i.e. that $\int D_{\Psi, \mathbbm{P}}(O)d \mathbbm{P}=\Psi$. This property is the basis for the construction of regular and asymptotically linear (RAL) estimators  based on influence functions in semiparametric statistics (see next section). $\dot{\Psi}$ need not be unique -- if restrictions are placed on the underlying model the number of influence functions is infinite. In fully nonparametric problems, the influence function is unique (if it exists) and is referred to as the \textit{efficient influence function} because it can be used to construct the most efficient unbiased semiparametric RAL estimator of $\Psi$. Because of this property, we will \textit{only} consider efficient IFs (EIFs) in this paper, and sometimes drop the term efficient for brevity. Finally, by extension of the intuition of derivatives of analytical functions, many of the standard rules of calculus such as the chain- and product rules hold for influence calculus, implying that influence functions for some seemingly complex parameters can be built up from simple building blocks \citep{kennedypres}. The simulation study contained in section \ref{sectionrct} will highlight one such example.

Unfortunately, infinite-dimensional parameters in nonparametric models, e.g. parameters that arise as functions of \textit{continuous} inputs, are not pathwise differentiable and hence their IFs do not exist, which is tied to the intuition that singletons are not measurable \citep{van2018cv}. Therefore, IFs are not defined for our problems of interest -- a topic we revisit in section \ref{maincontrib}. Before we move on, we give two important examples of influence functions that we will use throughout this paper.

\begin{example}[EIF of the mean] In generic (unrestricted) nonparametric regression ($\psi(x)= \mathbbm{E}_\mathbbm{P}[Y|X=x]$), the efficient influence function of $\Psi$ is ${\dot{\Psi}(O; \mathbbm{P}) = Y - \Psi}$. That is $D_{\Psi, \mathbbm{P}}(O) = Y$ is independent of nuisance parameters. 
\end{example}
\begin{example}[EIF of the average treatment effect] For the average treatment effect, \\ $\Psi=\mathbbm{E}_{X \sim \mathbbm{P}}[\tau(X)]$, the uncentered efficient influence function is given by 
\begin{equation*}
D_{\Psi, \mathbbm{P}}(O)=  \left(\frac{W}{\pi(X)}- \frac{(1-W)}{1-\pi(X)}\right) Y + \left[\left(1 - \frac{W}{\pi(X)}\right) \mu_1(x)-\left(1 - \frac{1-W}{1-\pi(X)}\right)\mu_0(X)\right]
\end{equation*}
with $\mu_w(x) = \mathbbm{E}_{\mathbbm{P}}[Y(w)|X=x)] = \mathbbm{E}_{\mathbbm{P}}[Y|W=w, X=x] \text{, } w \in \{0,1\}$. This has the same form as the well-known augmented inverse propensity weighted (AIPW) estimator \citep{robins1995semiparametric}.
\end{example}

\subsection{Plug-in estimation and correcting for plug-in bias}\label{pluglow}
IFs naturally arise in the context of bias-correction in \textit{plug-in} estimation.  Plug-in estimation exploits that $\Psi$ is a \textit{functional} mapping a statistical model $\mathbbm{P}\in \mathcal{P}$ to $\Psi(\mathbbm{P}) \in \mathbbm{R}$. If we can construct an estimator $\hat{\mathbbm{P}}$ of $\mathbbm{P}_0$ from a sample $\mathcal{D}=\{O_i\}^n_{i=1} \sim \mathbbm{P}_0$, we can estimate $\Psi(\mathbbm{P}_0)$ by $\Psi(\hat{\mathbbm{P}})$. Unfortunately, such plug-in estimators often inherit considerable first-order bias from the nonparametric estimators $\hat{Q}$ contained in $\hat{\mathbbm{P}}$, i.e. the estimators for high-dimensional nuisance parameters $Q(\mathbbm{P}_0)$ needed for estimation of $\Psi$. Since $\Psi(\mathbbm{P}_0)$ and $\Psi(\hat{\mathbbm{P}})$ are evaluations of \textit{the same functional} at different inputs, we can conceptualise this bias using a generalisation of the Taylor expansion to functionals, the von Mises expansion of $\Psi(\mathbbm{P})$:
\begin{equation}\label{vanmises}
\Psi(\mathbbm{\hat{P}}) - \Psi(\mathbbm{P}_0) = \int \dot{\Psi}(O;\hat{\mathbbm{P}})d(\mathbbm{\hat{P}} - \mathbbm{P}_0) + R_2(\mathbbm{\hat{P}}, \mathbbm{P}_0)  = - \int \dot{\Psi}(O; \hat{\mathbbm{P}})d \mathbbm{P}_0 + R_2(\mathbbm{\hat{P}}, \mathbbm{P}_0)
\end{equation}
where $R_2$ is a second order remainder. If it exists, the EIF discussed in the previous section satisfies (\ref{vanmises}), so we will generally consider $\dot{\Psi}(O;\mathbbm{P}) = D_{\Psi, \mathbbm{P}}(O) - \Psi$. Equation (\ref{vanmises}) suggests that we can approximately remove the first-order bias by using the  following adjusted estimator:
\begin{equation}\label{plugin}
\tilde{\Psi}(\hat{\mathbbm{P}}) = \Psi(\mathbbm{\hat{P}}) + \frac{1}{n} \sum^n_{i=1}\dot{\Psi}(O_i;\hat{\mathbbm{P}}) = \frac{1}{n} \sum^n_{i=1} D_{\Psi, \hat{\mathbbm{P}}}(O_i)
\end{equation}
where the second equation follows from $\dot{\Psi}(O, \hat{\mathbbm{P}}) = D_{\Psi, \hat{\mathbbm{P}}}(O) - \Psi(\hat{\mathbbm{P}})$.

\begin{remark}[Higher-order bias adjustment]
Here, we consider only only first order bias adjustment. If the second-order remainder $R_2$ decays too slowly, we can do better in theory by also using higher-order influence functions \citep{robins2008higher} which generalise higher-order Taylor expansions. In practice, higher-order influence functions, while correcting for higher-order bias, can also add substantial variance in finite samples \citep{van2014higher}.
\end{remark}

From equation (\ref{vanmises}) and (\ref{plugin}) it becomes obvious that we can write the plug-in bias-adjusted estimator $\tilde{\Psi}$ as a regular asymptotically linear (RAL) estimator\footnote{While the sum in the RHS of (\ref{ralexpression}) illustrates why such estimators are referred to as asymptotically linear, it is beyond the scope of this paper to discuss regularity of estimators. We refer the reader to e.g. \cite{tsiatis2007semiparametric} for an introduction to regular estimators.} obtained from a sample of size $n$, i.e.
\begin{equation}\label{ralexpression}
\sqrt{n}(\tilde{\Psi}(\hat{\mathbbm{P}}) - \Psi(\mathbbm{P}_0) )= \frac{1}{\sqrt{n}}\sum^n_{i=1} (D_{\Psi, \hat{\mathbbm{P}}}(O_i) - \Psi(\mathbbm{P}_0)) + o_{\mathbbm{P}_0}(1)
\end{equation}
if (i) $\sqrt{n}R_2(\hat{\mathbbm{P}}, \mathbbm{P}_0)$ is negligible ($o_{\mathbbm{P}_0}(1)$) and (ii) the bias induced in $\Psi$ by approximating $\mathbbm{P}_0$ with $\hat{\mathbbm{P}}$ is negligible. (i) can be achieved if $O_{\mathbbm{P}_0}(R_2(\hat{\mathbbm{P}}, \mathbbm{P}_0))= n^{-1/2}$,
whereas (ii) needs that $\tilde{\Psi}(\hat{\mathbbm{P}})$ is fit on an independent sample to that used for estimating $\hat{\mathbbm{P}}$ (unless we are restricting our attention to model classes $\mathcal{P}$ that are Donsker) \textit{and} that $\dot{\Psi}(\hat{\mathbbm{P}})$ is consistent for  $\dot{\Psi}({\mathbbm{P}_0})$ in l2-norm \citep{kennedypres}. If these conditions hold, then $\tilde{\Psi}(\hat{\mathbbm{P}})$ is asymptotically unbiased and $\sqrt{n}(\tilde{\Psi}(\hat{\mathbbm{P}}) - \Psi({\mathbbm{P}_0}))$ is  asymptotically normal with mean zero and minimum attainable variance (the semiparametric equivalent of the parametric Cramer-Rao Lower bound).

\section{Efficient estimation of structural target functions}\label{sectheor}
In this section, we theoretically motivate and analyse group-wise estimation and pseudo-outcome regression for approximate plug-in bias removal across target functions, leading to our main learning algorithms presented in Section 5. 

\subsection{Efficient group-averaged target estimators}\label{maincontrib}
This paper is based on the following, very intuitive idea: Although the EIF does not exist for infinite-dimensional target parameters \citep{van2018cv}, we can construct a sequence of pathwise differentiable target parameters approaching $\psi(x)$ arbitrarily closely. This idea leads both to the `Group-IF-learner' -- which estimates only a group-wise approximation to $\psi(x)$ using pathwise differentiable target parameters -- and the `IF-learner', which considers the nondifferentiable limit.  For ease of exposition, we consider only target parameters that can be written as (linear functions of) conditional outcome means here, and consider an example with nonlinear target parameters, e.g. risk ratios, in section \ref{sectionrct}.  Throughout, we will make heavy use of the \textit{uncentered} population EIF $D_{\Psi, \mathbbm{P}}$. 

To construct the `Group-IF-learner' in section 5.2 and to motivate the pseudo-outcome regression set-up we discuss next, we begin by characterising the EIF and efficient estimators of partition-based group-averaged targets. We loosely define a (fixed) partition of the input space $\mathcal{X}$ as $\pi =\{A^\pi_k \subset \mathcal{X}: A^\pi_k \cap A^\pi_j = \emptyset \text{ } \forall k \neq j\text{, } \cup^{K^\pi}_{k=1} A^\pi_k = \mathcal{X}\} \text{ for some } K^\pi  \geq 1$ and, for ease of exposition, let $\pi$ be any type of connected partition, e.g. those consisting of voronoi cells.

We first characterise the efficient estimator of the target within the $k^{th}$ cell of our partition,  $\Psi^\pi_k = \mathbbm{E}_{\mathbbm{P}_0}[\psi(x)|x \in A^\pi_k]$.

\begin{proposition}[Efficient estimator for group-averaged target parameters]\label{partprop} 
Let $\psi(\mathbbm{P})(x)$ be a target parameter that is a linear function of conditional mean(s). If $\hat{\Psi}=\frac{1}{n}\sum^n_{i=1} D_{\Psi, \hat{\mathbbm{P}}}(O_i)$ is the asymptotically efficient estimator for the population mean $\Psi(\mathbbm{P}) = \mathbbm{E}_{\mathbbm{P}}[\psi(X)]$ of a target parameter $\psi(x)$,  based on the efficient influence function of the form $D_{\Psi, \mathbbm{P}}(O) - \Psi(\mathbbm{P})$, then $\hat{\Psi}^\pi_k=\frac{1}{|\{i: X_i \in A^\pi_k\}|}\sum_{i: X_i \in A^\pi_k} D_{\Psi, \hat{\mathbbm{P}}}(O_i)$ is the efficient estimator for $\Psi^\pi_k$.
\end{proposition}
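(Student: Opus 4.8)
The plan is to recognise $\Psi^\pi_k = \mathbbm{E}_{\mathbbm{P}_0}[\psi(x)\mid x\in A^\pi_k]$ as a pathwise-differentiable functional of $\mathbbm{P}$ --- a ratio of two ordinary means --- to compute its efficient influence function from the one already assumed for $\Psi$ via influence calculus, and then to check that the one-step (plug-in-bias-corrected) estimator built from that EIF collapses exactly to the within-cell average $\hat{\Psi}^\pi_k$ in the statement.

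First I would write $\Psi^\pi_k = N_k(\mathbbm{P})/q_k(\mathbbm{P})$, with $q_k(\mathbbm{P}) = \mathbbm{P}(X\in A^\pi_k)$ and $N_k(\mathbbm{P}) = \mathbbm{E}_{\mathbbm{P}}[w(X)\,\psi(\mathbbm{P})(X)]$, where $w(x) := \mathbbm{1}\{x\in A^\pi_k\}$ is a fixed, known function. The functional $q_k$ is an ordinary mean of the observable $w(X)$, so its (centered) EIF in the nonparametric model is $w(X) - q_k$. The functional $N_k$ is the population mean $\Psi(\mathbbm{P}) = \mathbbm{E}_{\mathbbm{P}}[\psi(X)]$ with the integrand reweighted by the fixed function $w$, and the key claim --- which I expect to be the only non-mechanical step --- is that its uncentered EIF is $w(X)\,D_{\Psi,\mathbbm{P}}(O)$; i.e.\ inserting the fixed weight into the mean inserts the same weight into the uncentered EIF. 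I would justify this by decomposing the assumed EIF of $\Psi$ as $D_{\Psi,\mathbbm{P}}(O) - \Psi = \bigl(D_{\Psi,\mathbbm{P}}(O) - \psi(X)\bigr) + \bigl(\psi(X) - \Psi\bigr)$, where the second summand is the component tangent to the marginal law of $X$ and the first has conditional mean zero given $X$ (so in particular $\mathbbm{E}_{\mathbbm{P}}[D_{\Psi,\mathbbm{P}}(O)\mid X] = \psi(X)$, as one checks directly in the mean and ATE examples). Because $\psi$ is a \emph{linear} function of conditional means, this orthogonal component is ``local in $X$'': reweighting the target by $w(X)$ reweights precisely this component by $w(X)$, while the component of $N_k$ tangent to $\mathbbm{P}_X$ is $w(X)\psi(X) - N_k$. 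Summing gives $\dot{N}_k(O) = w(X)\bigl(D_{\Psi,\mathbbm{P}}(O) - \psi(X)\bigr) + w(X)\psi(X) - N_k = w(X)D_{\Psi,\mathbbm{P}}(O) - N_k$. (For a \emph{nonlinear} functional of conditional means this step fails, which is why the hypothesis is needed; that case is treated separately in Section~\ref{sectionrct} via a chain-rule/delta-method argument.)

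Next I would apply the quotient rule of influence calculus to $\Psi^\pi_k = N_k/q_k$: with $\dot{N}_k(O) = w(X)D_{\Psi,\mathbbm{P}}(O) - N_k$ and $\dot{q}_k(O) = w(X) - q_k$, a short cancellation yields
\[
\dot{\Psi}^\pi_k(O) = \frac{\dot{N}_k(O)}{q_k} - \frac{N_k}{q_k^2}\,\dot{q}_k(O) = \frac{w(X)}{q_k}\bigl(D_{\Psi,\mathbbm{P}}(O) - \Psi^\pi_k\bigr),
\]
so the uncentered EIF of $\Psi^\pi_k$ is $D_{\Psi^\pi_k,\mathbbm{P}}(O) = \tfrac{w(X)}{q_k}D_{\Psi,\mathbbm{P}}(O) + \bigl(1 - \tfrac{w(X)}{q_k}\bigr)\Psi^\pi_k$. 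In the nonparametric model this unique influence function is the efficient one, so the associated one-step estimator $\tfrac1n\sum_i D_{\Psi^\pi_k,\hat{\mathbbm{P}}}(O_i)$ is the efficient RAL estimator of $\Psi^\pi_k$, by exactly the reasoning recalled in Section~\ref{pluglow}. Finally I would simplify that estimator: estimating the marginal features empirically --- $\hat{q}_k = \tfrac1n\sum_i w(X_i) = |\{i : X_i\in A^\pi_k\}|/n$, and $\Psi^\pi_k$ by the estimator itself --- the $\Psi^\pi_k$-term contributes $\hat{\Psi}^\pi_k\cdot\tfrac1n\sum_i\bigl(1 - w(X_i)/\hat{q}_k\bigr) = 0$, leaving $\tfrac{1}{n\hat{q}_k}\sum_{i:X_i\in A^\pi_k} D_{\Psi,\hat{\mathbbm{P}}}(O_i) = \tfrac{1}{|\{i:X_i\in A^\pi_k\}|}\sum_{i:X_i\in A^\pi_k} D_{\Psi,\hat{\mathbbm{P}}}(O_i)$, which is exactly $\hat{\Psi}^\pi_k$. (An alternative route, which I would mention, is to condition on $m := |\{i : X_i\in A^\pi_k\}|$: given $m$, the within-cell observations are i.i.d.\ from the still-nonparametric law $\mathbbm{P}_0(\cdot\mid X\in A^\pi_k)$ with unchanged conditional laws $O\mid X$ on the cell, so the assumed one-dimensional result applies verbatim and identifies $\hat{\Psi}^\pi_k$ as conditionally efficient, hence unconditionally efficient since $m/n\to q_k$.) The only real work is the weighted-mean EIF claim above; the quotient rule, the final algebraic collapse, and the efficiency and regularity conclusions are routine or inherited from the one-dimensional discussion.
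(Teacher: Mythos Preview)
Your proposal is correct and complete, but it proceeds differently from the paper's main proof. The paper argues as follows: first it decomposes $\Psi=\sum_k \mathbbm{P}_0(X\in A^\pi_k)\,\Psi^\pi_k$ and correspondingly rewrites $\dot{\Psi}=D_{\Psi,\mathbbm{P}_0}(O)-\Psi=\sum_k\bigl(\mathbbm{1}\{X\in A^\pi_k\}D_{\Psi,\mathbbm{P}_0}(O)-\mathbbm{P}_0(X\in A^\pi_k)\Psi^\pi_k\bigr)$; taking expectations cell-by-cell gives $\mathbbm{E}[D_{\Psi,\mathbbm{P}_0}(O)\mid X\in A^\pi_k]=\Psi^\pi_k$, so $\hat{\Psi}^\pi_k$ is unbiased. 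Efficiency is then established by \emph{contradiction}: if some regular $\tilde{\Psi}^\pi_k$ had smaller asymptotic variance for each $k$, reassembling via $\sum_k (n_k/n)\tilde{\Psi}^\pi_k$ would beat $\hat{\Psi}$, contradicting that $\hat{\Psi}$ is the unique efficient RAL estimator of the population mean.

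Your route is the constructive one: write $\Psi^\pi_k$ as a ratio, compute its EIF explicitly via the quotient rule, and check the resulting one-step estimator collapses to $\hat{\Psi}^\pi_k$. This is more work (in particular your ``weighted-mean EIF claim'' for $N_k$ requires the orthogonal decomposition you sketch, and is exactly where the linearity hypothesis is used), but it buys you the explicit form of the cell-level EIF $\dot{\Psi}^\pi_k(O)=\tfrac{\mathbbm{1}\{X\in A^\pi_k\}}{q_k}\bigl(D_{\Psi,\mathbbm{P}}(O)-\Psi^\pi_k\bigr)$, which is what one actually needs for variance estimation and confidence intervals in the Group-IF-learner. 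The paper in fact presents your derivation as an alternative in a remark immediately following its proof (for the special case $\psi(x)=\mathbbm{E}[Y\mid X=x]$), arriving at the same EIF and the same estimator after plugging in the empirical $\hat{q}_k$. So both arguments are sound; the paper's is shorter because it piggybacks on efficiency of the global $\hat{\Psi}$, while yours is self-contained and yields the ingredients needed downstream.
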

\begin{proof}
This result is intuitive, and the proof is straightforward and stated in Appendix \ref{proofprop1}. 
\end{proof}

Based on this result, we can construct a target parameter that is arbitrarily close to $\psi(x)$ but for which both the EIF and resulting efficient estimator do exist. For any $x \in \mathcal{X}$, if we define a sequence of cells $A_x^\epsilon=[x-\epsilon, x+\epsilon]$, indexed by $\epsilon>0$, the EIF of the target parameter $\Psi_{x, \epsilon} = \mathbbm{E}_{\mathbbm{P}_0}[\psi(x)|x \in A^\epsilon_x]$ does exist for all $\epsilon >0$, resulting in efficient estimator $\hat{\Psi}_{x, \epsilon}= \frac{1}{|\{i: X_i \in  A^\epsilon_x\}|}\sum_{i: X_i \in  A_x^\epsilon} D_{\Psi,\hat{\mathbbm{P}}}(O_i)$. 

At $\epsilon=0$, the EIF is undefined, however, the efficient estimator $\hat{\Psi}_{x, \epsilon}$ does have a limit if $\exists i \in \{1, \ldots, n\}: X_i = x$, namely $\hat{\Psi}_{x, 0} = \frac{1}{|\{i: X_i = x\}|}\sum_{i: X_i = x} D_{\Psi, \hat{\mathbbm{P}}}(O_i)$ (and is undefined otherwise). Thus, if there is a value $X_i \in \{1, \ldots, n\}: X_i = x$ in our data-set, and if $\mathbbm{P}_0$ is known, then we can construct an unbiased estimator using the expression for the population EIF, namely $\mathbbm{E}_{\mathbbm{P}_0}[\hat{\Psi}_{x, 0}] = \mathbbm{E}_{\mathbbm{P}_0}[D_{\Psi, \mathbbm{P}_0}(O)| X=x]= \psi(x)$. If we had oracle knowledge of $\mathbbm{P}_0$, we could use this idea to construct $\psi(x)$ by pointwise estimation. However, the pointwise efficient estimator which estimates $\mathbbm{E}[D_{\Psi, \mathbbm{P}_0}(O)|X=x]$ by the empirical average $\hat{\Psi}_{x} = \frac{1}{|\{i: X_i = x\}|}\sum_{i: X_i = x} D_{\Psi, {\mathbbm{P}_0}}(O_i)$ over all observations with $X_i=x$ clearly is not feasible if $X$ is continuous. This can even be the case if Definition 1 holds -- namely when $X$ is discrete but high dimensional. 
\subsection{Approximate plug-in bias removal via pseudo-outcome regression}
Even though the direct estimator is infeasible, we \textit{can} leverage ideas from nonparametric statistics and machine learning to \textit{learn} the pointwise expected value from data using pseudo-outcome regression: If we let $D_{\psi,i} \equiv D_{\psi, {\mathbbm{P}_0}}(O_i)$ denote the pseudo-outcome which we obtain by evaluating the analytical form $D_{\Psi, {\mathbbm{P}_0}}(\cdot)$ of the population EIF at each point $i$ in our data-set, then we could estimate $\psi(x)$ by regressing $\{D_{\psi,i}\}^n_{i=1}$ on $\{X_i\}^n_{i=1}$.  When our target parameter is a nonlinear function of a conditional expectation, we can still construct an unbiased pseudo-outcome $D_{\psi}$, however, this will not take the exact form of the uncentered population EIF (see section \ref{sectionrct} for an example). Below, we formalize that in an oracle setting ($\mathbbm{P}_0$ known) and given identifiability of $\Psi$ using $D_{\Psi, \mathbbm{P}_0}$, this approach relies only on the assumptions associated with the regression method of choice.

\begin{proposition}[Learning target functions via oracle pseudo-outcome regressions]
Given identifiability of $\Psi$, access to an oracle with knowledge of the true $\mathbbm{P}_0$ and a learning algorithm $\mathcal{A}$, we can learn the expected value of  $D_{\psi, {\mathbbm{P}_0}}(O)$ from data at the minimax rate associated with $\mathcal{A}$ under no additional assumptions than the standard  assumptions associated with simple regression using $\mathcal{A}$.

If $\mathcal{A}$ is a generic nonparametric regression estimator, and   $\psi(x)$ is $p$-smooth, then, under the standard regularity conditions (sketched in appendix \ref{condthm1}), the minimax convergence rate of this oracle regression is \cite{stone1980optimal}'s minimax rate $n^{-p/(2p+d)}$. 
\end{proposition}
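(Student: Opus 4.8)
The plan is to observe that the oracle pseudo-outcome regression is, quite literally, an instance of ordinary nonparametric regression whose regression function is the target $\psi$ itself, so that both the achievable rate and the minimax lower bound transfer verbatim: the former from whatever guarantee is attached to the learner $\mathcal{A}$, the latter from \cite{stone1980optimal}'s theorem applied to the recast problem.

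First I would set $\tilde{Y}_i := D_{\psi, \mathbbm{P}_0}(O_i)$ and invoke identifiability. By the assumed identifiability of $\Psi$ through the (uncentered population) EIF --- equivalently, by the defining conditional-unbiasedness property --- we have $\mathbbm{E}_{\mathbbm{P}_0}[D_{\psi, \mathbbm{P}_0}(O)\mid X=x]=\psi(x)$, and for nonlinear functionals $D_\psi$ is the analogous unbiased pseudo-outcome constructed so that the same identity holds (see Section \ref{sectionrct}). Because the oracle supplies the true $\mathbbm{P}_0$, the map $O\mapsto D_{\psi, \mathbbm{P}_0}(O)$ is known exactly, so the $\tilde{Y}_i$ are computed without error and no sample splitting is needed. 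Consequently $\{(X_i,\tilde{Y}_i)\}_{i=1}^n$ are i.i.d. draws from a well-defined joint law with regression function $\psi$ and residuals $\xi_i:=\tilde{Y}_i-\psi(X_i)$ satisfying $\mathbbm{E}_{\mathbbm{P}_0}[\xi_i\mid X_i]=0$. This is exactly the data configuration on which $\mathcal{A}$ is designed to operate; there is no plug-in nuisance term and hence no additional bias or variance contribution beyond what $\mathcal{A}$ incurs when regressing a genuine outcome on $X$, which is the content of the ``no additional assumptions'' claim.

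Next I would invoke the guarantee attached to $\mathcal{A}$: by hypothesis, under its standard conditions on the covariate design, the noise, and the smoothness of the regression function, $\mathcal{A}$ attains its minimax rate for estimating a regression function from i.i.d. data; applying this with regression function $\psi$ and noise $\xi$ gives the claimed rate. For the specialization, I would take $\mathcal{A}$ to be a generic minimax-optimal nonparametric regression estimator and $\psi$ to lie in a $p$-smooth (H\"older) ball; under Stone's conditions --- compactly supported $X$ with density bounded away from $0$ and $\infty$, bounded conditional variance of $\xi$, and $\psi$ in the H\"older ball, as sketched in Appendix \ref{condthm1} --- \cite{stone1980optimal}'s theorem supplies both the achievability of and the matching lower bound at $n^{-p/(2p+d)}$ for the recast regression problem. (The lower bound is also corroborated by the fact that our class of problems contains ordinary nonparametric regression as the no-coarsening special case of Example 1, with $D_\psi=Y$.)

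The main obstacle --- and the only place where ``the standard regularity conditions'' do nontrivial work --- is controlling the moments of the pseudo-outcome $\tilde{Y}=D_{\psi, \mathbbm{P}_0}(O)$, in particular showing that its conditional variance (and whatever tail/higher-moment condition $\mathcal{A}$'s minimax theory requires of the noise) is finite and, ideally, uniformly bounded in $x$. In coarsening-at-random problems the uncentered EIF carries inverse-weighting factors (e.g. $\tfrac{W}{\pi(X)}$ and $\tfrac{1-W}{1-\pi(X)}$ in the CATE example), so bounded moments of $\xi$ require a positivity/overlap assumption on the coarsening mechanism $G_0$ together with a moment condition on $h(Y^*)$ --- but both are already implied by, or are natural companions of, identifiability, and mirror exactly the moment assumptions that ordinary regression of $Y$ on $X$ would need. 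Once this boundedness is in hand, $\xi$ satisfies the noise hypotheses of standard nonparametric regression theory and nothing beyond those hypotheses is used; I would relegate the verification of these moment bounds, and the precise statement of Stone's conditions, to Appendix \ref{condthm1}.
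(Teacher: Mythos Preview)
Your proposal is correct and follows essentially the same approach as the paper: both arguments observe that $\mathbbm{E}_{\mathbbm{P}_0}[D_{\psi,\mathbbm{P}_0}(O)\mid X=x]=\psi(x)$, so that in the oracle setting the pseudo-outcome regression is literally an instance of ordinary nonparametric regression with target $\psi$, and then invoke Stone's result under the standard moment and smoothness conditions relegated to Appendix \ref{condthm1}. If anything, your treatment is more careful than the paper's own short proof, in that you explicitly flag the one nontrivial verification---that the inverse-weighting factors in $D_\psi$ require overlap/positivity to ensure bounded conditional moments of the pseudo-outcome---whereas the paper simply states that ``bounded first and second moments'' are required.
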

\begin{proof}
Since  $\mathbbm{E}_{\mathbbm{P}_0}[D_\psi| X= x] = \mathbbm{E}_{\mathbbm{P}_0}[D_{\psi, {\mathbbm{P}_0}}(O)| X=x]= \psi(x)$, and we are in an oracle setting, i.e. have knowledge of the model $\mathbbm{P}_0$, we have an observation $D_{\psi, {\mathbbm{P}_0}}(O_i)$ whenever we have an observation $O_i$ for which $X_i=x$. Therefore, we can think of $D_{\psi} \equiv D_{\psi, {\mathbbm{P}_0}}(O)$ as a pseudo-outcome in nonparametric regression with conditional mean $\psi(x)$ and some random error. Thus, if we assume that $\psi(x)$ is sufficiently regular to be estimable from the data using regression methods $\mathcal{A}$, then it must be possible to do so using the `canonical' pseudo-outcome $D_{\psi}$.

For generic nonparametric regression, the assumptions that need to be placed on the underlying statistical model $\mathbbm{P}_0$ include some regularity conditions on the fixed function $\psi(x)$, and the main requirement for the random pseudo-outcomes $D_{\psi}$ are bounded first and second moments. For a sketch of the full conditions as originally characterised by \cite{stone1980optimal}, refer to appendix \ref{condthm1}.
\end{proof}

As this estimator is also infeasible due to a lack of oracle knowledge of $\mathbbm{P}_0$ in practice, we propose a two-stage estimator which first estimates a plug-in model $\hat{\mathbbm{P}}$ from the data, and then (analogously to the low-dimensional setting discussed in section 3.2) removes the plug-in bias in $\psi(\hat{\mathbbm{P}})$ through regression of $D_\psi$ on $X$. As we will show using a result of \cite{kennedy2020optimal} in the next section, to be able to make guarantees on this estimator, we will have to perform the two steps on two separate samples, which is intuitively obvious due to the inherent risk of overfitting and follows by direct extension from the low-dimensional case discussed in section 3.3.

\subsubsection{Theoretical analysis}\label{pracimp}
The first natural questions that arise in the pseudo-outcome regression set-up we constructed above are (i) how the rate of convergence of this estimator compares to the oracle rate ($\mathbbm{P}_0$ known), and (ii) how it compares to the original plug-in estimator.

The key to answering these questions is given in the recent work of \cite{kennedy2020optimal}, providing a general result to bound the error of pseudo-outcome regression with imputed or estimated components, which gives us the ability to bound the error of our plug-in estimates when estimated using two independent samples of size $n$. For completeness we state \cite{kennedy2020optimal}'s theorem in appendix \ref{kennedythm}. To be able to bound the error in plug-in pseudo-outcome regression, we need a mild assumption on the second-stage regression model (see appendix \ref{kennedythm}) necessary to ensure stability of the second stage regression \citep{kennedy2020optimal}. Further, we make the following very weak assumptions on the set-up for plug-in pseudo-outcome regression for an infinite-dimensional target parameter $\psi(x)$.
\begin{assumption}[Set-up of plug-in-based, bias-corrected pseudo-outcome regression]\label{asssetup}
We assume the following set-up for our plug-in-based, bias-corrected pseudo-outcome regression:
\begin{itemize}
\item We use sample splitting or cross fitting for the plug-in regression. In a sample-splitting set-up, let  $O_0^n = ({O_{01}, \ldots, O_{0n}})$ and $O_1^n = ({O_{11}, \ldots, O_{1n}})$ denote the two independent samples used for first- and second-stage regression, respectively.
\item We have pseudo-outcomes $\{D_{\psi, \hat{\mathbbm{P}}}(O_i)\}_{i \in O_1^n}$ based on some first stage model $\hat{\mathbbm{P}}$ estimated using $O_0^n$
\item We have a plug-in bias-corrected estimator $\hat{\psi}(x)$ based on the plug-in pseudo-outcomes $\{D_{\psi, \hat{\mathbbm{P}}}(O_i)\}_{i \in O_1^n}$ which we use as a target in the second stage pseudo-outcome regression
\item There is an oracle equivalent $\tilde{\psi}(x)$ of $\hat{\psi}(x)$ from an infeasible regression of the true  $\{D_{\psi, {\mathbbm{P}}}(O_i)\}_{i \in O_1^n}$ on $\{X_i\}_{i \in O_1^n}$
\item $\psi(x)$ is sufficiently regular so that Proposition 2 holds and we can learn the expected value of $D_{\psi, {\mathbbm{P}}}(O_i)$ from data
\end{itemize}
\end{assumption}
These two assumptions, and \cite{kennedy2020optimal}'s theorem immediately give us the following result:
\begin{corollary}[Error bound for plug-in-based, bias-corrected pseudo-outcome regression estimators of structural target functions]\label{errorbound} 
Under assumptions \ref{kennedyass} and \ref{asssetup} we have the following result for pseudo-outcome regression of $D_{\Psi, \hat{\mathbbm{P}}}(O)$ on $X$
\begin{equation}\label{errorpsi}
\mathbbm{E}\left[\{\hat{\psi}(x) - \psi(x)\}^2\right] \lesssim \mathbbm{E}\left[\{\tilde{\psi}(x) - {\psi}(x)\}^2\right] + \mathbbm{E}\left[\{\mathbbm{E}[D_{\psi, \hat{\mathbbm{P}}}(O)|X=x, O_0^n] - \psi(x)\}^2\right]
\end{equation}
\end{corollary}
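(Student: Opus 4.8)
The plan is to obtain (\ref{errorpsi}) as a direct specialization of \cite{kennedy2020optimal}'s general error bound for pseudo-outcome regression (stated in Appendix \ref{kennedythm}) to the infinite-dimensional, coarsening-at-random setting fixed by Assumption \ref{asssetup}. First I would set up the correspondence with that framework: the ``true'' regression function is $\psi(x)$, the estimated pseudo-outcome is $D_{\psi, \hat{\mathbbm{P}}}(O)$ evaluated on the second-stage sample $O_1^n$ using the first-stage fit $\hat{\mathbbm{P}}$ (trained on the independent sample $O_0^n$), and the oracle pseudo-outcome is $D_{\psi, \mathbbm{P}_0}(O)$. The crucial input is that this oracle pseudo-outcome satisfies $\mathbbm{E}_{\mathbbm{P}_0}[D_{\psi, \mathbbm{P}_0}(O)\mid X=x] = \psi(x)$, which holds by identifiability of $\Psi$ via the uncentered EIF (respectively via the constructed unbiased pseudo-outcome in the nonlinear-target case) — this is exactly the content of Proposition 2 — so the infeasible regression of $\{D_{\psi,\mathbbm{P}_0}(O_i)\}_{i\in O_1^n}$ on $\{X_i\}_{i\in O_1^n}$ genuinely targets $\psi$, and its output $\tilde{\psi}(x)$ is the oracle estimator appearing on the right-hand side of (\ref{errorpsi}).

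Next I would verify the hypotheses of \cite{kennedy2020optimal}'s theorem. Sample splitting / cross-fitting (the first bullet of Assumption \ref{asssetup}) ensures that, conditionally on $O_0^n$, the nuisance estimate $\hat{\mathbbm{P}}$ is fixed, so the second-stage regression operates on i.i.d.\ data with a fixed but possibly misspecified pseudo-outcome; the stability condition on the second-stage learner $\mathcal{A}$ is assumed directly (Assumption \ref{kennedyass}); and Proposition 2, invoked through the last bullet of Assumption \ref{asssetup}, guarantees that $\psi$ is regular enough for $\mathcal{A}$ to learn its conditional mean. With these in hand, \cite{kennedy2020optimal}'s theorem applied conditionally on $O_0^n$ yields, up to universal constants, the pointwise decomposition
\[
\mathbbm{E}\!\left[\{\hat{\psi}(x)-\psi(x)\}^2 \mid O_0^n\right] \lesssim \mathbbm{E}\!\left[\{\tilde{\psi}(x)-\psi(x)\}^2 \mid O_0^n\right] + \left\{\mathbbm{E}[D_{\psi,\hat{\mathbbm{P}}}(O)\mid X=x,\, O_0^n] - \psi(x)\right\}^2 ,
\]
where the last term is precisely the squared conditional bias of the estimated pseudo-outcome relative to the target; it is here that the identity $\mathbbm{E}_{\mathbbm{P}_0}[D_{\psi,\mathbbm{P}_0}(O)\mid X=x]=\psi(x)$ from Proposition 2 turns the generic ``pseudo-outcome drift'' term of \cite{kennedy2020optimal} into the displayed form. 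Taking expectation over $O_0^n$, and noting that the oracle regression $\tilde\psi$ does not use $O_0^n$, gives (\ref{errorpsi}).

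The main obstacle — and essentially the only non-bookkeeping step — is confirming that \cite{kennedy2020optimal}'s abstract ``pseudo-outcome with estimated components'' framework really subsumes $D_{\psi,\hat{\mathbbm{P}}}$ in our generality: one must check that the stability/boundedness requirements on $\mathcal{A}$ and the moment conditions on the pseudo-outcomes (bounded first and second conditional moments, uniformly over the class of plug-in models $\hat{\mathbbm{P}}$ that can arise from the first stage) are implied by Assumptions \ref{kennedyass}--\ref{asssetup}, and that in the nonlinear-target case the constructed unbiased pseudo-outcome $D_{\psi}$ — which is no longer literally the uncentered EIF — still fits the same template and inherits the required moment bounds. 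Once that is granted, the remainder is a direct substitution, and the two terms on the right of (\ref{errorpsi}) are exactly the oracle (irreducible) estimation error $\mathbbm{E}[\{\tilde\psi(x)-\psi(x)\}^2]$ and the smoothed first-order plug-in bias that the second-stage regression inherits from the nuisance estimate $\hat{\mathbbm{P}}$.
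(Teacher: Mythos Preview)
Your proposal is correct and is essentially the same approach as the paper's: the authors' proof consists of the single sentence ``This follows directly from the assumptions and \cite{kennedy2020optimal}'s theorem,'' and your write-up is simply an explicit unpacking of that application (identifying $f=D_{\psi,\mathbbm{P}_0}$, $\hat f=D_{\psi,\hat{\mathbbm{P}}}$, $m=\psi$, and invoking Assumptions \ref{kennedyass}--\ref{asssetup} to meet the hypotheses). The extra care you take in conditioning on $O_0^n$ and then averaging, and in flagging the nonlinear-target case, goes slightly beyond what the paper spells out but does not constitute a different route.
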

\begin{proof}
This follows directly from the assumptions and \cite{kennedy2020optimal}'s theorem.  
\end{proof}
Because the shape of the second-order remainder term on the right hand side depends on the specific target parameter of interest, it is not immediately clear what this result implies in general. Therefore, it can be instructive to consider specific examples. First, since plug-in bias correction should have no effect when estimating the mean $\mu(x)=\mathbbm{E}_{\mathbbm{P}}[Y|X=x]$ in simple nonparametric regression, as we have shown in section \ref{review} that the uncentered EIF ($D_{\mu, \mathbbm{P}}=Y$) is independent of $\mathbbm{P}$ by construction, it is reassuring that the second term here will be zero since there is no second-order remainder in nonparametric regression \citep{kennedypres}. Thus, as expected, nonparametric regression attains the oracle rate. Further, in the class of coarsening at random problems, this remainder has the familiar doubly robust form, meaning that the plug-in bias removal step loosens the requirements on convergence also in the higher-dimensional case, which is a familiar notion extending directly from the standard, low-dimensional case (refer to appendix \ref{CATEImpl} for an example of how this works mathematically in CATE estimation).

Corollary \ref{errorbound} immediately leads to the following result on convergence rates: 
\begin{corollary}[Minimax convergence rates for generic nonparametric estimation of infinite-dimensional target parameters using plug-in-based, bias-adjusted pseudo-outcome regression]\label{ratethm} 
Suppose the assumptions of Corollary \ref{errorbound} hold. Further, assume that the target function $\psi$ is $p$-smooth and can be estimated with minimax rate $n^{-\frac{p}{2p+d}}$, where $x \in \mathbbm{R}^d$. 
Then 
\begin{equation}
\mathbbm{E}\left[\{\hat{\psi}(x) - \psi(x)\}^2\right] \lesssim \max(n^{-\frac{2p}{2p+d}}, O_{\mathbbm{P}_0}(R_2))
\end{equation}
where $R_2=\mathbbm{E}\left[\{\mathbbm{E}[D_{\psi, \hat{\mathbbm{P}}}|X=x, O_0^n] - \psi(x)\}^2\right]$.\\
In particular, if the order of the oracle term dominates the remainder term, we can achieve oracle rates with the proposed plug-in estimator, i.e. if 
\begin{equation}
 O_{\mathbbm{P}_0}(R_2)  \lesssim  n^{-\frac{2p}{2p+d}}
\end{equation}
\end{corollary}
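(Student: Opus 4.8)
The plan is to read the statement off by combining Corollary \ref{errorbound} with the oracle rate furnished by Proposition 2, so the argument is essentially bookkeeping. First I would invoke Corollary \ref{errorbound}: under the assumptions of that corollary the pointwise mean squared error of the feasible estimator splits as $\mathbbm{E}[\{\hat{\psi}(x)-\psi(x)\}^2]\lesssim \mathbbm{E}[\{\tilde{\psi}(x)-\psi(x)\}^2]+\mathbbm{E}[\{\mathbbm{E}[D_{\psi,\hat{\mathbbm{P}}}(O)\mid X=x,O_0^n]-\psi(x)\}^2]$, where the second summand is exactly the remainder $R_2$ appearing in the statement.

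Second, I would bound the oracle term. By the set-up in Assumption \ref{asssetup}, $\tilde{\psi}(x)$ is the output of the infeasible regression of the true pseudo-outcomes $\{D_{\psi,\mathbbm{P}}(O_i)\}_{i\in O_1^n}$ on $\{X_i\}_{i\in O_1^n}$, and $\psi$ is assumed regular enough for Proposition 2 to apply. Proposition 2 then gives that this oracle regression attains the minimax rate of the chosen method $\mathcal{A}$; under the hypothesis that $\psi$ is $p$-smooth and estimable at rate $n^{-p/(2p+d)}$, squaring yields $\mathbbm{E}[\{\tilde{\psi}(x)-\psi(x)\}^2]\lesssim n^{-2p/(2p+d)}$.

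Third, I would assemble the pieces. With $a\equiv n^{-2p/(2p+d)}$ and $b\equiv O_{\mathbbm{P}_0}(R_2)$, the two displays give $\mathbbm{E}[\{\hat{\psi}(x)-\psi(x)\}^2]\lesssim a+b\le 2\max(a,b)$, and absorbing the factor $2$ into $\lesssim$ produces the claimed bound $\mathbbm{E}[\{\hat{\psi}(x)-\psi(x)\}^2]\lesssim\max(n^{-2p/(2p+d)},O_{\mathbbm{P}_0}(R_2))$. The ``in particular'' clause is then immediate: if $O_{\mathbbm{P}_0}(R_2)\lesssim n^{-2p/(2p+d)}$ then $\max(a,b)\asymp a$, so $\hat{\psi}$ inherits the oracle rate $n^{-2p/(2p+d)}$.

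Since every ingredient is already in hand, there is no substantive obstacle; the only point needing care is to verify that the regularity conditions of Proposition 2 and of Corollary \ref{errorbound} (the second-stage stability condition borrowed from \cite{kennedy2020optimal} together with the moment conditions on $D_\psi$ and the smoothness conditions on $\psi$) are mutually compatible with the $p$-smoothness hypothesis imposed here, so that the single rate $n^{-p/(2p+d)}$ simultaneously governs both the oracle regression and the feasible one. I would flag this compatibility explicitly and otherwise leave the proof at the three lines above.
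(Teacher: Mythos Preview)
Your proposal is correct and takes essentially the same approach as the paper, which simply states that the result ``follows directly from the assumptions and Corollary \ref{errorbound}.'' You have merely made explicit the bookkeeping that the paper leaves implicit: invoking Corollary \ref{errorbound} for the two-term decomposition, Proposition 2 for the oracle rate $n^{-2p/(2p+d)}$, and then the trivial bound $a+b\lesssim\max(a,b)$.
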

\begin{proof}
This follows directly from the assumptions and Corollary \ref{errorbound}. 
\end{proof}

For the special case of coarsening at random problems, i.e. problems possessing doubly robust properties, the second-order remainder has a special form, which we briefly illustrated in appendix \ref{CATEImpl}. The property that only \textit{either} the coarsening mechanism \textit{or} the outcome regressions have to be consistently estimated has been extensively studied and exploited in the context of low-dimensional inference (see e.g. \cite{rubin2008empirical}).  In particular, it is well-known that this structure can be exploited to achieve root-n convergence of the average treatment effect when we can give a correctly specified parametric model for either component of the problem. Corollary \ref{ratethm} suggests that we can leverage similar properties also for high-dimensional inference (see discussion below) and that the  requirements on how much structure needs to be known can be substantially relaxed when the aim is only to attain oracle rates. 

\begin{remark}[Standard inference based on central limit theorems for EIF-based pseudo-outcome regression]
If we wanted to achieve convergence to a CLT, we would need that both the remainder term $R_2$ \textit{and} the oracle regression converge at parametric rates and are estimated using a suitable estimator. Clearly, by Corollary \ref{ratethm} this is impossible without making further assumptions on the structure of the problem and the nature of the regression estimator. However, for some data-adaptive nonparametric machine learning methods based on neighborhood  smoothing, there already  exist conditions that are milder than parametric assumptions which lead to a CLT asymptotically. As an example, we propose using the random forests described in \cite{wager2018estimation} and \cite{Athey2019}, which have some special properties (e.g. more randomness) that enable inference, for both regression stages.  In some cases, it may even be possible to outperform the estimation strategies discussed in \cite{Athey2019} by using the pseudo-outcomes we consider:  in section \ref{sectionrct} we use \cite{Athey2019}'s causal forests to show this empirically, and provide a more in-depth comparison of the estimators in appendix \ref{wadiscussion}.
\end{remark}

\subsubsection{Practical implications}
\paragraph{Plug-in estimation versus oracle estimation} It depends on the nature of the target function $\psi(x)$ whether the oracle rate in Corollary 2 is faster than the rate of the plug-in estimator.  Asymptotically,  pseudo-outcome regression has a major advantage over plug-in estimators particularly when we consider target parameters that are \textit{contrasts} of multiple parameters, e.g. differences such as CATE or nonlinear functions such as risk ratios. Plug-in estimators cannot adapt to the smoothness of the target function because they estimate components separately, while learners based on pseudo-outcomes estimate $\psi(x)$ directly, and can hence adapt. In the case of treatment effects, for example, it is often assumed that CATE is a much simpler function than the outcome regressions \citep{kunzel2019metalearners}, leading to a strong advantage of pseudo-outcome regression over plug-in estimators (see also the example in Appendix \ref{CATEImpl}). 

\paragraph{Attaining oracle rates in coarsening at random problems}
If coarsening at random problems are not fully nonparametric because we are able to exploit some knowledge of the nuisance functions, the proposed pseudo-outcome regression approach can easily attain oracle rates.  If, for example, we have information about the underlying (i) sparsity (or the subset of covariates that determine one of the nuisance functions), (ii) its parametric form or (iii) the exact nuisance function (e.g propensity scores), this can lead to the achievement of oracle rates because the nuisance estimation problem decreases in difficulty relative to the target estimation problem. This shows that incorporating domain knowledge can substantially help in achieving oracle rates. The most trivial case in which this is true is when the coarsening mechanism is known, e.g. if the propensity score is \textit{known} because we are in a RCT setting. In such settings we automatically achieve the oracle bound as $R_2$ is zero by construction.  This gives a new argument why simple plug-in estimation (without bias correction) is not asymptotically optimal \textit{especially} when coarsening mechanisms are known. Nonetheless,  domain knowledge does not have to be reflected in exact parametric forms. Even imposing some exclusion restrictions, such as excluding certain functional forms, can help reducing the difficulty of the problem by moving from a fully nonparametric to a semiparametric problem. Particularly in applications in biostatistics and medicine, such knowledge could be inferred from mechanistic understanding of diseases and domain knowledge on daily practice, e.g. by identifying variables that are more or less likely to influence missingness or selection bias. In applications from econometrics, such domain knowledge could stem from, for example, well-established theoretical models of micro- and macroeconomic theory.

With the discussion in this section, we have highlighted the immense potential of using EIFs to construct \textit{and} analyse estimators of target functions, in a way that is analogous to the well-studied low-dimensional case. In the following, we discuss the resulting learning algorithms in more detail.

\section{Learning algorithms}\label{learningalgs}
Based on the theoretical analysis presented in the previous section, we propose two general learning algorithms that can be used for estimation and inference for the broad class of problems of interest discussed in this paper.  We begin with our main algorithm, the general `IF-learner' which follows naturally from the pseudo-outcome regression set-up in the previous section. After that, we characterise a grouped version, the Group-IF-learner, which we believe could be of independent practical interest low sample size settings where we cannot rely on asymptotics but may have information on the coarsening mechanism.

\subsection{The IF-learner}
Our main proposal, the `IF-learner', is the algorithm that arises naturally from our assumptions  \ref{asssetup} on the pseudo-outcome regression set-up, and is presented in Algorithm \ref{alg1}. While it is motivated from the perspective of high-dimensional plug-in bias correction,  it can also be seen as a generalization of \cite{kennedy2020optimal}'s CATE estimator to a much broader class of target parameters. Here, we rely on cross-fitting (as discussed in e.g.  \cite{chernozhukov2018double} and \cite{kennedy2020optimal}) instead of sample splitting to be more efficient in our use of data.

\begin{algorithm}\caption{IF-learner}\label{alg1}
\begin{algorithmic}[1]
\State	Inputs: A sample $\mathcal{D}=\{O_i\}^n_{i=1}$, a target parameter $\psi$ with associated IF-based pseudo-outcome $D_{\psi, \mathbbm{P}}(O)$ which depends only on a subset of all nuisance parameters $Q \equiv Q(\mathbbm{P})$, a learning algorithm $\mathcal{A}$, and a number $K$ of cross-fitting folds to create
\State \textbf{First stage: plug-in model estimation}
\State split the sample $\mathcal{D}$ in $k$ non-overlapping folds
\For{$k \leftarrow 1:K$}
\State Fit nuisance models $\hat{Q}_{-k}=\mathcal{A}(\mathcal{D}_{-k})$ on all but the $k^{th}$ fold
\State Predict $\hat{D}_i = D_{\psi, \hat{\mathbbm{P}}}(O_i)$ for $O_i \in \mathcal{D}_k$ using the nuisance model $\hat{Q}_{-k}$
\EndFor 
\State \textbf{Second stage: plug-in bias correction step}
\State estimate $\psi(x)$ as a function of $x$ by regressing $\{\hat{D}_i\}^n_{i=1}$ on $\{X_i\}^n_{i=1}$ as $\hat{\psi}(x) = \mathcal{A}(\{\hat{D}_i, X_i\}^n_{i=1})$
\State Output: $\hat{\psi}$, a model that can output pointwise estimates of $\psi$
\end{algorithmic} 
\end{algorithm}

In the CATE case, the IF-learner reduces to \cite{kennedy2020optimal}'s `DR-learner' (and is similar to the two-stage estimators discussed in \cite{lee2017doubly} and \cite{fan2019estimation}). When learning CATE from observational data, the input learning algorithms for nuisance estimation would be two generic regression estimators for the potential outcome regressions, and an estimator for the propensity score.  If there is some additional knowledge on the structure of the problem, e.g. sparsity, covariates used for selection or functional forms, this could be incorporated by choosing only estimators that reflect this knowledge. As we discussed above, this can immediately lead to the algorithm achieving oracle rates. 

The algorithm can, however, be used for many more problem settings, particularly for other problems with coarsening at random structure. As we will show in section \ref{sectionrct}, this algorithm could, for example, be used to estimate personalised causal parameters that have not received as much attention in the machine learning literature, such as risk ratios. Further, we note that while the algorithmic description uses the same learning algorithm $\mathcal{A}$ for all components, that is certainly not necessary in practice, and separate learners could be used for each part of the problem. As discussed in section \ref{pracimp}, pointwise inference relying on a CLT using the IF-learner is possible only under more restrictive assumptions on the data generating process and for a restrictive choice of learning algorithms $\mathcal{A}$. The first stage of this algorithm also lends itself to the easy extension of incorporating super-learning \citep{polley2011super} for nuisance estimator model selection to improve finite sample performance, which would be an interesting idea to develop further.

\paragraph{Related approaches} Albeit motivated from the perspective of plug-in debiasing using influence functions, we arrive at an algorithm that is inherently related to the problem-generic algorithms described in \cite{semenova2017estimation} and \cite{foster2019orthogonal}, which exploit Neyman-Orthogonality \citep{chernozhukov2018double}. This is not surprising, given that both influence functions and Neyman-Orthogonality rely in their construction on orthogonal scores. Both papers have a different focus than the present paper: \cite{semenova2017estimation} develop semiparametric theory for estimation of generic target functions that are dependent on only a subset of a possibly high-dimensional vector of covariates using linear projections based on least squares series estimators. \cite{foster2019orthogonal} focus on bounding excess risk of loss-based learning algorithms with nuisance component, where the loss function satisfies Neyman-Orthogonality.

\subsection{The Group-IF-learner}
The second algorithm we propose here is motivated by the observation of \cite{Chernozhukov2018} that if we focus on estimating \textit{key features} of a function instead of the function itself, this can facilitate inference. Of the `key features' proposed in \cite{Chernozhukov2018} for treatment effect estimation, we are particularly interested in `Sorted Group Average Treatment Effects' (GATES), which provide a coarse summary of treatment effect heterogeneity by heterogeneity groups. In particular, the authors propose the following algorithm for constructing GATES based on $G$ groups and known propensity scores:
\begin{enumerate}
\item Split the data in two samples, an auxiliary sample $\mathcal{D}_A$ and an estimation sample $\mathcal{D}_E$.
\item First stage: Fit a treatment effect model $\tilde{\psi}(x)$ and a baseline model $\tilde{\mu}(x)=\mathbbm{E}[Y|X=x, W=0]$ on $\mathcal{D}_A$
\item Second stage: Predict $\tilde{\psi}(X_i)$ for the estimation sample $\mathcal{D}_E$, and build groups of individuals that are most similar in terms of their treatment effects by grouping observations in $G$ groups according to the empirical quantiles of their estimated treatment effects $\{\tilde{\psi}(X_i)\}_{i\in \mathcal{D}_E}$. Within these groups, estimate treatment effects by orthogonalised weighted regression, possibly including $\tilde{\mu}(x)$ as a covariate to improve precision. 
\end{enumerate}

Because of sample splitting, often also referred to as \textit{honesty} \citep{Athey2016}, and known propensity scores, the within-group estimates are unbiased and standard inference is possible. However, we see two points of possible improvements to this algorithm.  First, if the first stage model $\tilde{\psi}(x)$ suffers from high degrees of plug-in bias, then the groups are created based on noise only -- which is irreversible because partitioning is a hard-thresholding operation. While this does not invalidate inference, it could substantially decrease the usefulness of the proposed algorithm. Second, the regression adjustment procedure for the second stage proposed in \cite{Chernozhukov2018} is not optimal from the standpoint of efficiency, as we have shown the true form of the efficient estimator for group-averaged target parameters in Proposition \ref{partprop}, which is a within-group AIPW estimator in the CATE case.

The `Group-IF-learner' we propose is an adapted version of the GATES approach proposed in \cite{Chernozhukov2018}, in which we use our IF-learner to remove the first-order plug-in bias in the first stage of the algorithm to obtain a better approximation $\tilde{\psi}(x)$ for grouping. We then adapt the second stage by using the efficient estimators based on the efficient influence function. Note that, because we are considering groups in the second stage (as in Proposition \ref{partprop}), the efficient influence function exists in the standard sense. This leads to the algorithm described in Algorithm \ref{alg2}.

\begin{algorithm}\caption{Group-IF-learner}\label{alg2}
\begin{algorithmic}[1]
\State	Inputs: All inputs of Algorithm \ref{alg1}. $G$, the number of groups to be created
\State Split the sample $\mathcal{D}$ into two non-overlapping groups $\mathcal{D}_A$ and $\mathcal{D}_E$
\State \textbf{First stage: Learning step}
\State Fit nuisance models $\hat{Q}_{A}=\mathcal{A}(\mathcal{D}_{A})$
\State Create auxiliary, plug-in debiased model $\tilde{\psi}(x)$ by using the IF-learner (Algorithm \ref{alg1}) on $\mathcal{D}_A$
\State \textbf{Second stage: Estimation step}
\State Split $\mathcal{D}_E$ into $G$ groups by using the empirical quantiles of $\tilde{\psi}(x)$
\State Estimate $\Psi_g$ for each group $g$ by using the empirical average $\hat{\Psi}_k = \frac{1}{n_g}\sum^n_{i \in g}D_{\psi, \hat{\mathbbm{P}}}(O_i)$ and an unbiased empirical variance estimate $\frac{1}{n_g(n_g-1)}\sum_{i \in g} (D_{\psi, \hat{\mathbbm{P}}}(O_i)- \hat{\Psi}_g)^2 $
\State Output: $\{\hat{\Psi}_g\}^G_{g=1}$ and associated variance estimates. 
\end{algorithmic} 
\end{algorithm}

Due to sample-splitting, the within-group estimates are asymptotically unbiased and Gaussian if the nuisance parameters in the first stage are estimated at parametric rates. Thus, this discretisation is not useful in high-dimensional cases where there is no information available. However, in randomised experiments (as was the setting in \citet{Chernozhukov2018}) and other problems where the coarsening mechanism is \textit{known}, this estimator is unbiased even in finite samples due to the double robustness property of the EIF and asymptotically Gaussian \citep{rubin2008empirical}, and thus allows for standard inference. Thus, our second algorithm is suited more to high-information, low-sample size settings -- which is the case for RCTs. 

\begin{remark}[Accounting for uncertainty induced by sample splitting]
To obtain final GATES estimates, \cite{Chernozhukov2018} repeat the grouping procedure for multiple splits of the data into estimation and auxiliary samples, resulting in a number of point estimates and confidence intervals that are aggregated into one estimate by group and associated confidence intervals by a tailored and new assumption-free procedure developed in \cite{Chernozhukov2018}. We did not yet attempt to incorporate this idea further, however because our learner should be more stable due to first-stage bias correction and second stage efficient estimation, we conjecture that this would only improve our algorithm relative to that of \cite{Chernozhukov2018}.\end{remark}

\section{Simulation study: Estimating treatment effects}\label{sectionrct}
In this section, we return to our main motivating example -- estimating (possibly) heterogeneous treatment effects from experimental and observational data -- with two simple simulation studies.

\subsection{Assumptions and set-up}
 Recall the problem setting of binary treatment effect estimation: we have observations $O_i=(Y_i, X_i, W_i)$ with binary treatment $W_i \in \{0,1\}$ assigned according to propensity score $\pi(x)=\mathbbm{P}_0(W=1|X=x)$. We are interested in  estimating quantities such as the CATE, $\tau(x)=\mathbbm{E}_{F_0}[Y(1) - Y(0)|X=x]$, which are functions of the two potential outcome regressions.  The CATE and its population average, the average treatment effect (ATE), $\Upsilon = \mathbbm{E}_{\mathbbm{P}_0}[\tau(x)]$ and other such causal parameters are identifiable from data under three assumptions:

\begin{assumption}\textbf{Consistency: } If an individual is assigned treatment $W=w$, we observe the associated potential outcome $Y=Y(w)$
\end{assumption}

\begin{assumption}
\textbf{Unconfoundedness: } Treatment is randomised, with treatment assignment probability $\pi(x)$ depending only on the covariates. That is, 
\begin{equation}
Y(0), Y(1) \indep W | X 
\end{equation}
\end{assumption}

\begin{assumption}
\textbf{Overlap:} 
The treatment assignment probability is uniformly bounded away from zero and one, that is, 
\begin{equation}
0 < \pi(x) < 1 \text{ for all } x \in \mathcal{X} 
\end{equation}
i.e. we observe each potential outcome with positive probability.
\end{assumption}

These identifying assumptions are in general untestable, and whether they hold in practice should be determined by domain experts. Further, we also need to assume that both $Y(0)$ and $Y(1)$ have finite variance under $\mathbbm{P}_0$. Last, we note that within our framework we also make the assumption that our observations $\{O_i\}^{n}_{i=1}$ are i.i.d. -- that is, they are mutually independent\footnote{ In clinical trials, this might not always be the case, as treatment assignment is often randomised under the restriction that a fixed number of participants is assigned to each trial arm in order to avoid severe imbalances \citep{friedman2015fundamentals}, making the assignments dependent \citep{rubin2008covariate}. Such dependent sampling scheme in the context of empirical efficiency maximisation is discussed in \cite{rubin2008covariate}, and it would be interesting to investigate whether their results could be used for heterogeneous inference within our framework. For now, we restrict our attention to the more common i.i.d. assumption.}. 

Recall from example 3.2. that the uncentered EIF of the ATE (and hence our pseudo-outcome) is given by
\begin{equation*}
D_{\Psi, \mathbbm{P}}(O)=  \left(\frac{W}{\pi(X)}- \frac{(1-W)}{1-\pi(X)}\right) Y + \left[\left(1 - \frac{W}{\pi(X)}\right) \mu_1(x)-\left(1 - \frac{1-W}{1-\pi(X)}\right)\mu_0(X)\right]
\end{equation*}
with $\mu_w(x) = \mathbbm{E}_{\mathbbm{P}}[Y(w)|X=x)] = \mathbbm{E}_{\mathbbm{P}}[Y|W=w, X=x]$. Note that when the propensity score is \textit{known}, the estimator for the ATE based on this EIF is unbiased regardless of the quality of the two regression function estimators -- because the first term is the standard inverse-propensity weighted Horvitz-Thompson estimator \citep{horvitz1952generalization} which is unbiased if $\pi(x)$ is known, and the second term has expectation zero \citep{rubin2008empirical}. If $\pi(x)$ is \textit{not} known then this estimator is doubly-robust, as it will give consistent treatment effects estimates if \textit{either} $\pi(x)$ or $\mu_w(x)\text{, } w \in \{0, 1\}$ can be consistently estimated.
 
\subsubsection{Targets that are nonlinear functions of the potential outcome regressions}
In clinical trials and other experimental studies, in particular when the outcome variable is binary, the target parameter of interest is often not the (C)ATE $\mu_1(x)-\mu_0(x)$, but a transformation $f(\mu_0(x), \mu_1(x))$ of the expected potential outcomes such as a risk ratio \citep{rubin2008empirical}. Due to the generality of our approach, our framework can naturally handle these cases as well. \cite{rubin2008empirical} show that the EIF for generic substitution-based transformation $f(\mu_0, \mu_1)$ is given by
\begin{equation*}
IF_{f(\mu_0, \mu_1), \mathbbm{P}}(O) = IF_{\mu_0, \mathbbm{P}}(O) \frac{\partial}{\partial \mu_0} f(\mu_0, \mu_1)+ IF_{\mu_1, \mathbbm{P}}(O)\frac{\partial}{\partial \mu_1} f(\mu_0, \mu_1)
\end{equation*}
where $IF_{\mu_w, \mathbbm{P}}$ is the \textit{centered} efficient influence function of the potential outcome mean $\mu_w$. This idea can  be naturally incorporated as an alternative pseudo-outcome in our learning algorithms. This gives our framework a considerable advantage over most learning algorithms for heterogeneous treatment effect estimation, which typically output only conditional average treatment effects. For risk ratios $f_{RR}(\mu_0(x), \mu_1(x))=\frac{\mu_1(x)}{\mu_0(x)}$  and odds ratios $f_{OR}(\mu_0(x), \mu_1(x))= \frac{\mu_1(x)}{1-\mu_1(x)}/ \frac{\mu_0(x)}{1-\mu_0(x)}$, and the relevant partial derivatives \\ $[\frac{\partial}{\partial \mu_1}  f(\mu_0, \mu_0), \frac{\partial}{\partial \mu_0} f(\mu_0, \mu_0)]$ are given by $[\mu_0^{-1}, -\mu_1\mu_0^{-2}]$ and $[\frac{1-\mu_0}{(1-\mu_1)^2\mu_0}, \frac{-\mu_1}{(1-\mu_1)\mu_0^2}]$, respectively \citep{rubin2008empirical}. 

In the experiments, we provide an example using risk-ratios (RR), for which we use the following unbiased EIF-inspired pseudo-outcome:
\begin{equation*}
\begin{split}
D_{RR, \mathbbm{P}}(O)= \frac{1}{\mu_0(X)}\left[\frac{W}{\pi(X)}Y + \left(1 - \frac{W}{\pi(X)}\right)\mu_1(X) - \mu_1(X) \right] \\- \frac{\mu_1(X)}{\mu^2_0(X)} \left[\frac{1-W}{1-\pi(X)}Y + \left(1 - \frac{1-W}{1-\pi(X)}\right)\mu_0(X) - \mu_0(X) \right] + \frac{\mu_1(X)}{\mu_0(X)}
\end{split}
\end{equation*}
 
\subsection{Simulation study 1: Learning with known propensity scores}
In this section we illustrate the performance differences between plug-in estimators and bias-corrected estimators in experimental studies when propensity scores are \textit{known}. To do so, we revisit the one-dimensional toy-example used in \cite{kennedy2020optimal}, which is based on a difficult piecewise polynomial baseline effect function $\mu_0(x)$ from \cite{gyorfi2006distribution}, while the treatment effect $\tau(x)=\tau=0$ is not only constant but also zero. This set-up illustrates very well how plug-in bias affects data-adaptive target estimates even when the covariates are only one-dimensional. Thus, this simple example serves the purpose of illustrating the impact of plug-in debiasing.

\cite{kennedy2020optimal} used this example to highlight the difficulty of plug-in learners without bias correction step to handle observational problems in which there is (i) a very difficult baseline model and (ii) strong (unknown) selection bias, for which a propensity score has to be estimated. Since in RCTs and other experimental studies the randomisation probabilities are \textit{known}, plug-in debiased methods are very valuable because -- as we have tried to highlight throughout section \ref{sectheor} --  when one of the nuisance functions is known, we asymptotically achieve oracle rates and hence optimal convergence without further assumptions. Below, we illustrate this idea using some evidence from simulations. 

\subsubsection{Large samples and the IF-learner}\label{exp1}
For the example of CATE estimation from experimental data, we present results using smoothing splines in Figure \ref{experiment1}. Refer to appendix \ref{expsetup} for a detailed description of experimental settings.  The three settings allow us to to show that (i) our learner converges faster than an uncorrected plug-in estimator even under pure randomisation ($\pi(x)=\pi=0.5$), (ii) performs a lot better than plug-in estimation when propensity scores are known even for relatively modest sample sizes and (iii) for fixed training sample size (n=500) even performs better when randomisation is assumed, but there is some form of unknown selection bias. The latter might sound counter-intuitive but hinges on the observation that the second-stage regression also acts like a regularizer on the first stage regression output.

\begin{figure}[!htb]
 \makebox[\textwidth][c]{\includegraphics[width=1.25\textwidth]{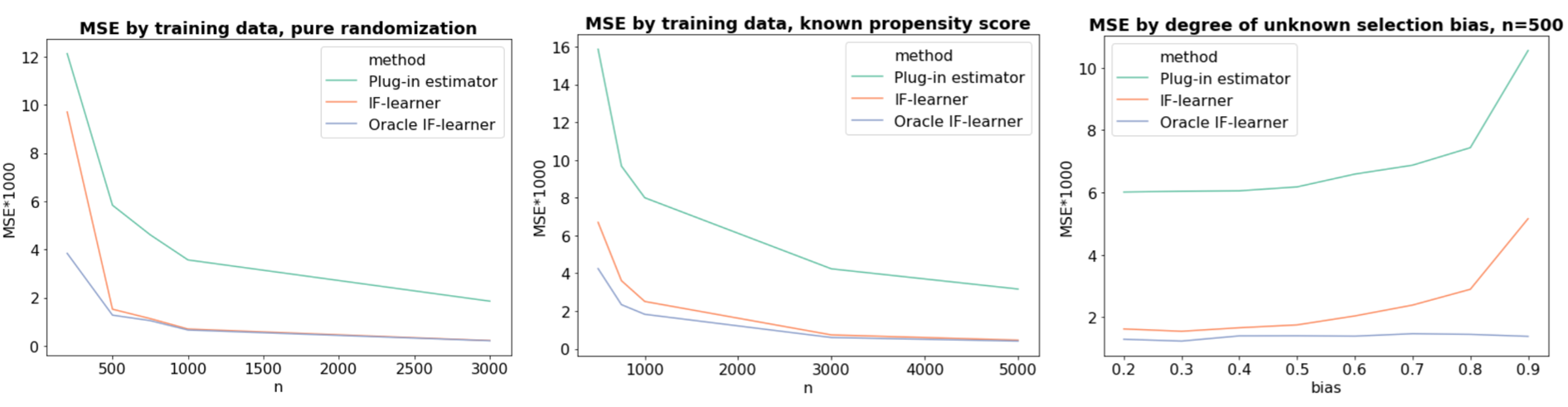}}
\caption{Simulation performance of plug-in model, plug-in bias corrected model (IF-learner) and its oracle version for CATE estimation under different settings. MSE averaged over 500 independent runs.}\label{experiment1}
\end{figure}

To illustrate that our framework can handle other causal parameters, we use the same set-up (difficult baseline, no treatment effect) but this time to estimate risk-ratios as discussed in the previous section. To do so, we use the same baseline function, but now as the success probability of a Bernoulli-GLM. Refer to appendix \ref{expsetup} also for the exact simulation specification and estimation strategies used in this scenario. 

The results in Figure \ref{experiment2} highlight that in the RR setting a trade-off between small and large sample performance becomes much more obvious than in the CATE setting above: because the outcome $Y$ is binary while the pseudo-outcome $D_{RR, \mathbbm{P}}(O)$ in the second stage is continuous, the estimation problem is more difficult and substantially more data is needed to remove error induced in the second stage regression due to small sample variance. 

\begin{figure}[!htb]
\centering
\includegraphics[width=\textwidth]{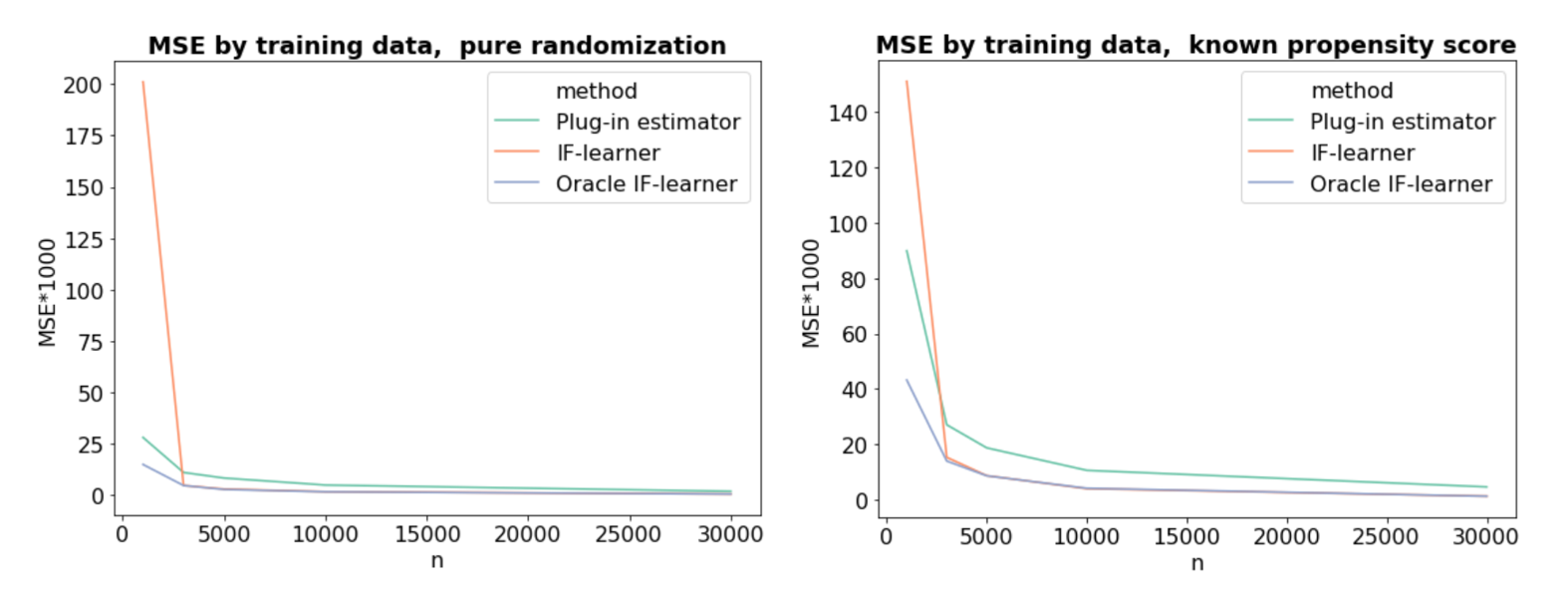}
\caption{Simulation performance of plug-in model, plug-in bias corrected model (IF-learner) and its oracle version for RR estimation under different settings. MSE averaged over 500 independent runs.}\label{experiment2}
\end{figure}

\subsubsection{Small samples and the Group-IF-learner}\label{exp2}
We now turn to illustrating the performance of our Group-IF-learner in smaller samples, which is often the setting for RCTs. We use the original CATE example with strong selection on observables according to a known propensity score for illustrative purposes. Simulation results are presented in Table \ref{res1}. We compare the base learner of \cite{Chernozhukov2018}, where we only consider the version \textit{without} repeated sample splitting (Row 1) to our Group-IF-learner (Row 5) as well as three versions with different combinations of first- and second-stage estimators, to analyse the performance differences in more detail.

We make three interesting observations. First, by comparing the first row to all other rows of Table \ref{res1}, we observe that the base learner of \cite{Chernozhukov2018} is outperformed by one version of our estimator for all training sample sizes $n<1000$. Second, we observe that the best estimator depends on the sample size: for very small samples, $n\leq 100$, estimators using the Horvitz-Thompson estimator within each group work best, while the EIF estimators perform substantially better from $n=500$ onwards. This indicates that there is a clear trade-off between removing plug-in bias of the second stage plug-in estimator, and inducing larger variance by doing so in small samples. Finally, by comparing rows 3 and 5, we observe that removing the plug-in bias of the first stage estimator is not the main source of gain -- rather, for moderate sample size, the gain relative to \cite{Chernozhukov2018} seems to mainly be driven by replacing the second stage orthogonalised weighted regression  with the group-wise efficient estimator derived in Proposition \ref{partprop}.

\begin{table}[!htb]
\begin{threeparttable}
\small
\setlength\tabcolsep{4.5pt}
\begin{tabular}{l|llllll}
\toprule
\midrule
Method/ Training observations &  100                                                     & 200                                                     & 500                                                     &750 & 1000                                                     & 2000                                                     \\
\toprule
\midrule
\cite{Chernozhukov2018}        & \begin{tabular}[c]{@{}l@{}}1.329\\ (0.182)\end{tabular} & \begin{tabular}[c]{@{}l@{}}0.562\\ (0.162)\end{tabular} & \begin{tabular}[c]{@{}l@{}}0.021\\ (0.002)\end{tabular} & \begin{tabular}[c]{@{}l@{}}0.010\\ (0.001)\end{tabular}      & \begin{tabular}[c]{@{}l@{}}0.006\\ ($<0.001$)\end{tabular} & \begin{tabular}[c]{@{}l@{}}0.003\\ ($<0.001$)\end{tabular} \\
\begin{tabular}[c]{@{}l@{}}Group-IF-learner: \\Plug-in 1st stage, HT 2nd \end{tabular}     & \begin{tabular}[c]{@{}l@{}}0.844\\ (0.041)\end{tabular} & \begin{tabular}[c]{@{}l@{}}0.420\\ (0.016)\end{tabular} & \begin{tabular}[c]{@{}l@{}}0.169\\ (0.005)\end{tabular} & \begin{tabular}[c]{@{}l@{}}0.117\\ (0.003)\end{tabular}      & \begin{tabular}[c]{@{}l@{}}0.088\\ (0.003)\end{tabular}      & \begin{tabular}[c]{@{}l@{}}0.043\\ (0.001)\end{tabular}      \\
\begin{tabular}[c]{@{}l@{}}Group-IF-learner: \\ Plug-in 1st stage, EIF 2nd \end{tabular}   & \begin{tabular}[c]{@{}l@{}}1.250\\ (0.353)\end{tabular} & \begin{tabular}[c]{@{}l@{}}1.506\\ (1.010)\end{tabular} & \begin{tabular}[c]{@{}l@{}}0.014\\ (0.001)\end{tabular} & \begin{tabular}[c]{@{}l@{}}0.009\\ ($<0.001$)\end{tabular} & \begin{tabular}[c]{@{}l@{}}0.007\\ ($<0.001$)\end{tabular} & \begin{tabular}[c]{@{}l@{}}0.003\\ ($<0.001$)\end{tabular} \\
\begin{tabular}[c]{@{}l@{}}Group-IF-learner: \\IF-learner 1st stage, HT 2nd \\\end{tabular} & \begin{tabular}[c]{@{}l@{}}0.890\\ (0.040)\end{tabular} & \begin{tabular}[c]{@{}l@{}}0.401\\ (0.015)\end{tabular} & \begin{tabular}[c]{@{}l@{}}0.162\\ (0.005)\end{tabular} & \begin{tabular}[c]{@{}l@{}}0.117\\ (0.003)\end{tabular}      & \begin{tabular}[c]{@{}l@{}}0.086\\ (0.003)\end{tabular}      & \begin{tabular}[c]{@{}l@{}}0.044\\ (0.001)\end{tabular}      \\
\begin{tabular}[c]{@{}l@{}}Group-IF-learner: \\IF-learner 1st stage, EIF 2nd\end{tabular} & \begin{tabular}[c]{@{}l@{}}2.184\\ (0.883)\end{tabular} & \begin{tabular}[c]{@{}l@{}}0.338\\ (0.086)\end{tabular} & \begin{tabular}[c]{@{}l@{}}0.015\\ (0.001)\end{tabular} & \begin{tabular}[c]{@{}l@{}}0.009\\ ($<0.001$)\end{tabular} & \begin{tabular}[c]{@{}l@{}}0.006\\ ($<0.001$)\end{tabular} & \begin{tabular}[c]{@{}l@{}}0.003\\ ($<0.001$)\end{tabular}\\
\midrule
\bottomrule
\end{tabular}
\begin{tablenotes}[para,flushleft]
\begin{footnotesize}
First row corresponds to the GATES estimator proposed in \cite{Chernozhukov2018}, last row corresponds to our Group-IF-learner. Rows 2 to 5 change either the first or the second stage of algorithm \ref{alg2}. 
HT denotes Horvitz-Thompson estimator, EIF denotes efficient influence function estimator (here: AIPW estimator). The out-of-sample MSE for 1000 randomly generated test observations is averaged over 500 simulations, standard error of the mean in parentheses.
\end{footnotesize}
\end{tablenotes}
\end{threeparttable}
\caption{MSE of different methods for group-based treatment effect inference for different number of training observations}\label{res1}
\end{table}

\subsection{Simulation study 2: Learning from higher-dimensional observational data}
In this section, we revisit the toy-examples used in \cite{Athey2019} to show how we can use our ideas to learn from higher-dimensional observational data with unknown selection bias using machine learning methods.  Here, we use \cite{Athey2019}'s random forests in both stages of our algorithm. We compare the IF-learner to an uncorrected plug-in estimator as in the previous section and to \cite{Athey2019}'s causal forests (CFs) (for a discussion on how exactly our algorithm differs from CFs, refer to appendix \ref{wadiscussion}). To allow for fair comparison with \cite{Athey2019}, we use out-of-bag predictions instead of cross-fitting for the IF-learner. 

In table \ref{atheyres}, we consider multiple experimental settings based on the toy-examples in \cite{Athey2019}. For each, we let $X \in [0,1]^{10}$, and use different combinations of confounding and treatment effect specifications. The settings in rows (1)-(3) were investigated directly in \cite{Athey2019}, rows (4)-(5) are new as they contain treatment effects that are not simply additive. For a full description of the data generating processes, refer to appendix \ref{expsetup2}. 

\begin{table}[!htb]
\centering
\begin{threeparttable}
\small
\setlength\tabcolsep{4pt}
\begin{tabular}{lll|lll}
\toprule
\midrule
Confounding                                                                                      & Effect specification                                                                                                    & \begin{tabular}[c]{@{}l@{}}n\\ (Training)\end{tabular}     & Plug-in                                                 & \begin{tabular}[c]{@{}l@{}}Causal\\ Forest\end{tabular}                                                       & IF-learner                                                         \\
\toprule
\midrule
\begin{tabular}[c]{@{}l@{}}$\mu_0(x)=0$\\ $\pi(x)=0.5$\end{tabular}                              & $\tau(x)=\xi(x_1)\xi(x_2)$& 800  & \begin{tabular}[c]{@{}l@{}}0.063\\ (0.001)\end{tabular} & \begin{tabular}[c]{@{}l@{}}0.100\\ (0.003)\end{tabular}            & \begin{tabular}[c]{@{}l@{}}0.095\\ (0.002)\end{tabular}            \\
                                                                                               &                                                                                                                        & 1600 & \begin{tabular}[c]{@{}l@{}}0.037\\ (0.001)\end{tabular} & \begin{tabular}[c]{@{}l@{}}0.054\\ (0.001)\end{tabular}            & \begin{tabular}[c]{@{}l@{}}0.056\\ (0.001)\end{tabular}            \\ \midrule
\begin{tabular}[c]{@{}l@{}}$\mu_0(x)=2x_3 -1$\\ $\pi(x)= 0.25(\beta_{2,4}(x_3)+1)$\end{tabular} & $\tau(x)=0$                                                                                                             & 800  & \begin{tabular}[c]{@{}l@{}}0.040\\ (0.001)\end{tabular} & \begin{tabular}[c]{@{}l@{}}0.011\\ (0.001)\end{tabular}            & \begin{tabular}[c]{@{}l@{}}0.020\\ (0.001)\end{tabular}            \\
                                                                                              &                                                                                                                       & 1600 & \begin{tabular}[c]{@{}l@{}}0.027\\ (0.001)\end{tabular} & \begin{tabular}[c]{@{}l@{}}0.007\\ (\textless{}0.001)\end{tabular} & \begin{tabular}[c]{@{}l@{}}0.014\\ (\textless{}0.001)\end{tabular} \\ \midrule
\begin{tabular}[c]{@{}l@{}}$\mu_0(x)=2x_3 -1$\\ $\pi(x)= 0.25(\beta_{2,4}(x_3)+1)$\end{tabular}                                                & $\tau(x)=\xi(x_1)\xi(x_2)$ & 800  & \begin{tabular}[c]{@{}l@{}}0.149\\ (0.003)\end{tabular} & \begin{tabular}[c]{@{}l@{}}0.110\\ (0.003)\end{tabular}            & \begin{tabular}[c]{@{}l@{}}0.107\\ (0.003)\end{tabular}            \\
                                                                                               &                                                                                                                     & 1600 & \begin{tabular}[c]{@{}l@{}}0.091\\  (0.002)\end{tabular} & \begin{tabular}[c]{@{}l@{}}0.057\\ (0.001)\end{tabular}            & \begin{tabular}[c]{@{}l@{}}0.060\\ (0.001)\end{tabular}            \\ \midrule
                                              \begin{tabular}[c]{@{}l@{}}$\mu_0(x)=2x_3 -1$\\ $\pi(x)= 0.25(\beta_{2,4}(x_3)+1)$\end{tabular}                                                     &     $\tau(x)=3\mu_0(x)$                                                                                                                    & 800  & \begin{tabular}[c]{@{}l@{}}0.085\\ (0.002)\end{tabular} & \begin{tabular}[c]{@{}l@{}}0.149\\ (0.004)\end{tabular}            & \begin{tabular}[c]{@{}l@{}}0.126\\ (0.003)\end{tabular}            \\
                                                                                                 &                                                                                                                         & 1600 & \begin{tabular}[c]{@{}l@{}}0.050\\ (0.001)\end{tabular} & \begin{tabular}[c]{@{}l@{}}0.074\\ (0.002)\end{tabular}            & \begin{tabular}[c]{@{}l@{}}0.070\\ (0.001)\end{tabular}            \\ \midrule
                                                             \begin{tabular}[c]{@{}l@{}}$\mu_0(x)=2x_3 -1$\\ $\pi(x)= 0.25(\beta_{2,4}(x_3)+1)$\end{tabular}                                           &                                                                     $\tau(x)=\mu_0(x)\xi(x_1)\xi(x_2)$                                                      & 800  & \begin{tabular}[c]{@{}l@{}}0.306\\ (0.003)\end{tabular} & \begin{tabular}[c]{@{}l@{}}0.401\\ (0.005)\end{tabular}            & \begin{tabular}[c]{@{}l@{}}0.335\\ (0.004)\end{tabular}            \\
                                                                                                 &                                                                                                                         & 1600 & \begin{tabular}[c]{@{}l@{}}0.200\\ (0.002)\end{tabular} & \begin{tabular}[c]{@{}l@{}}0.252\\ (0.003)\end{tabular}            & \begin{tabular}[c]{@{}l@{}}0.215\\ (0.003)\end{tabular}  \\ \midrule
\bottomrule
\end{tabular}
\begin{tablenotes}[para,flushleft]
\begin{footnotesize}
The out-of-sample MSE is averaged over 200 runs with 1000 randomly generated test observations. Standard errors in parentheses. All random forests are trained with 2000 trees and defaults as implemented in the package \url{grf.}
Here, $\xi(x_i)=1+\frac{1}{1+\exp(-20*(x_i-1/3))}$ and $\beta_{a,b}$ is the density of the beta-distribution with parameters $a,b$. 
\end{footnotesize}
\end{tablenotes}
\end{threeparttable}
\caption{MSE of different methods for CATE estimation for different number of training observations and data generating processes}\label{atheyres}
\end{table}

We make a number of interesting observations: First, none of the learners outperforms all others across all settings. The simple plug-in learner outperforms CFs and the IF-learner in some settings -- namely when the set-up is either very simple or very difficult. We conjecture that, similar to the previous section, this is mainly the case when asymptotics have not yet kicked in and is likely to change with increasing sample size. When comparing CF and IF-learner, we observe that CFs perform best when treatment effect and baseline are perfectly separable. The IF-learner performs relatively better in smaller samples, and when treatment effects are not simply additive. 

\subsection{Discussion}
In all experiments, when comparing simple plug-in learners and IF-learners, there is a clear trade-off in terms of finite/small sample variance and large sample performance. The results discussed in section \ref{pracimp} hold asymptotically, however, in small samples the variance induced by using two-stage estimation using imputed pseudo-outcomes can outweigh the asymptotic benefits of our plug-in debiasing procedure relative to simple plug-in estimators. Therefore, it would be of great interest to investigate whether approaches such as super-learning \citep{polley2011super} or variance reduction techniques could help in finding the right type and degree of regularisation for first stage models. Further, it would be interesting -- similar to the discussion of the limits of Bayesian nonparametric learning of CATE in moderate to small samples contained in \cite{alaa2018limits} -- to investigate theoretically when plug-in bias correction is expected to be most useful in finite samples.

\section{Conclusion}
In this paper, we proposed IF-learning, a framework for learning structural target functions based on influence functions using generic machine learning methods. Using a very simple and intuitive idea, based on replacing the uncentered efficient influence function with approximations when it does not exist, we considered efficient estimation, plug-in bias and the possibility to perform inference using machine learning estimators for entire target functions. We hope that our approach can help the field of applied statistics realise more of the inherent potential of the recent advances in machine learning. 

We also believe that there are many interesting and new research directions within this learning framework. In particular, we believe that much can be learned from further exploring  connections of our proposals, which are mainly based on the first-order efficient influence function of a target parameter, with other concepts in low- and high-dimensional inference that have been proposed in the last 20 years. In particular, we believe that it might be interesting to further incorporate ideas from the literature on targeted maximum likelihood estimation \citep{van2011targeted}, empirical efficiency maximisation \citep{rubin2008empirical}, super learning \citep{polley2011super}, Neyman-Orthogonality and high-dimensional locally robust semiparametric estimation (e.g. \cite{chernozhukov2018double, semenova2017estimation, chernozhukov2020locally}), and approaches to achieving fast rates using cross-fitting \citep{newey2018cross} and higher-order influence functions \citep{robins2008higher}. We hope to be able to explore some of these in future work.

\bibliography{dissertation}
\bibliographystyle{apalike}
\newpage
\appendix
\section*{Appendices}

\section{Additional literature review}\label{litreview}
We draw our inspiration from ideas proposed in the econometrics, biostatistics, semiparametric statistics, causal inference and (statistical) machine learning communities. Due to the sheer breadth of related topics, this review cannot be exhaustive. Instead, we focus on key ideas from these fields that shaped the approach presented in this paper. Throughout the paper, we highlighted similarities and differences with related literature in more detail. 

We start with the field of econometrics, where the last five years have seen exciting developments in terms of using machine learning for estimation of and inference on (heterogeneous) treatment effects -- which motivated much of this paper. The ideas most relevant to us were shaped by the focus on heterogeneous inference using tree-based methods in \citet{Athey2016}, \citet{wager2018estimation} and \citet{Athey2019}, on the one hand, and the focus on exploiting generic machine learning methods for the estimation of structural target parameters in the recent work of Chernozhukov and colleagues (e.g.  \cite{chernozhukov2018double}, \cite{semenova2017estimation}  and \cite{chernozhukov2020locally}) on the other. While the former builds on classical ideas from the Neyman-Rubin Potential outcomes framework (\cite{neyman1923applications}, \cite{rubin1978bayesian}),  developed to maturity for average treatment effect estimation over the last 30 years in econometrics \citep{athey2017estimating}, much of the latter builds on a property the authors refer to as \textit{Neyman orthogonality}, a notion relying on estimating equations induced by orthogonal scores \citep{chernozhukov2018double}. 

The latter also motivated the `orthogonal statistical learning' framework presented in \cite{foster2019orthogonal}, which, like our proposed framework, considers generic target functions, albeit with a different focus: Instead of enabling high-dimensional parameter estimation and inference (the goal of the present paper), \cite{foster2019orthogonal} focus on using orthogonality to optimally bound excess risk and construct loss functions in such learning problems. Further, many of the questions we initially tried to tackle in this paper were motivated by the inspiring discussion of the fundamental limits of using generic machine learning for heterogeneous treatment effect inference from experimental studies in \citet{Chernozhukov2018}. We also adapt a proposal of \cite{Chernozhukov2018} (which is itself built on findings of \cite{genovese2008adaptive})  -- to learn only `key features' of a function instead of the function itself to facilitate inference -- for one of our learning algorithms.

In biostatistics, there exist slightly older ideas contained in the more mature literature on targeted learning, which has been focused on developing smart strategies for plug-in estimation using parametric submodels and estimating equations based on influence functions. We took much of the excellent formalisation of problems of interest in statistics from this literature, but did not yet explore in detail further connections between our approach and the broad literature on targeted maximum likelihood estimators (TMLEs) as originally proposed in \citet{van2006targeted}. In particular, the TMLE approach presented in \citet{van2018cv} is provides an alternative solution to the infinite-dimensional plug-in bias problem we are trying to solve. How to combine these ideas would be very interesting to explore in future work. Additionally, while we also do not yet incorporate the idea of super learning \citep{polley2011super}, this idea is complementary to our approach. 

Instead of the approaches discussed above, we ultimately found the notion of (low-dimensional) plug-in estimation and bias correction as it characteristic for \citet{robins2017minimax}'s work on (higher-order) influence functions the conceptually simplest theoretical framework to work in. The simple mathematical intuition and elegance behind using efficient influence functions and  Taylor expansion/Newton-Raphson Step-like bias correction procedures naturally gave rise to the idea that we propose in this paper. We also build on the ground-breaking work of Robins and colleagues that lead to the characterisation of many coarsening at random problems in semiparametric statistics in terms of their efficient influence functions 30 years ago, as summarised in \cite{van2003unified} and \cite{tsiatis2007semiparametric}. 

Because of their inherent mathematical elegance, influence functions have recently begun to rise to popularity also in the machine learning community. In \cite{alaa2020discriminative} and \cite{alaa2020frequentist}, they were used to characterise uncertainty in deep learning, albeit using a different property of the influence function more natural to the field of robust statistics originally coined by \cite{hampel1980robust}. Using the notion of influence function that we are interested in here, \cite{alaa2019validating} use influence functions to adjust for plug-in bias when choosing between different methods for treatment effect estimation via cross-validation. The original questions posed in this paper were inspired partially by this work. 

Finally, our approach relates to some ideas from the causal inference community within machine learning, due to its focus on building model-agnostic algorithms for CATE estimation, often referred to as meta-learners (e.g. \cite{nie2017quasi} and \cite{kunzel2019metalearners}). The recently independently proposed doubly robust CATE meta-learner of \citet{kennedy2020optimal}, as well as similar two-step estimators presented previously in \cite{lee2017doubly} and \cite{fan2019estimation}, are special cases of one of the algorithms that we propose in this paper, as our `IF-learner' reduces to the same estimator in the context of CATE estimation. A similar idea to ours in the special case of estimating treatment effects of continuous treatments was also proposed in \cite{kennedy2017nonparametric}. Additionally, an important and more general result on error bounds for pseudo-outcome regression proposed in \cite{kennedy2020optimal} paved the way for most of the convergence rate discussions later in this paper. \cite{kennedy2020optimal} also discusses the fundamental limits of CATE estimation in fully nonparametric settings in terms of learning rates, and a similar discussion for Bayesian nonparametric settings was presented in \cite{alaa2018limits} and \cite{alaa2018bayesian}.

\section{Technical appendix}
\subsection{Proof of Proposition 1}\label{proofprop1}
\begin{proof}(By contradiction, using the efficient RAL estimator)\\ We begin by noting that, because our target parameters are (linear combinations of) conditional means 
\begin{equation*}
\Psi = \mathbbm{E}_{{\mathbbm{P}}_0}[\psi(x)] = \mathbbm{E}_{\mathbbm{P}_0}[\mathbbm{E}_{\mathbbm{P}_0}[\psi(x)|X \in A^\pi_k]] = \sum^{K^\pi}_{k=1} \mathbbm{P}_0(X \in A^\pi_k)\Psi^\pi_k
\end{equation*}
so the full population mean is simply a weighted average of the partition-means. Therefore, we can write 
\begin{equation*}
\dot{\Psi} = D_{\Psi, \mathbbm{P}_0}(O) - \Psi = \sum^{K^\pi}_{k=1} \bigg(\mathbbm{1}(X \in A^\pi_k)D_{\Psi, \mathbbm{P}_0}(O) - \mathbbm{P}_0(X \in A^\pi_k)\Psi^\pi_k\bigg)
\end{equation*}
Since $\mathbbm{E}_{P_0}[\dot{\Psi}]= \sum^{K^\pi}_{k=1} \mathbbm{P}_0(X \in A^\pi_k)\big(\mathbbm{E}_{\mathbbm{P}_0}[D_{\Psi, \mathbbm{P}_0}(O)|X \in A^\pi_k] - \Psi^\pi_k\big)=0$, we must have that $\mathbbm{E}_{\mathbbm{P}_0}[D_{\Psi, \mathbbm{P}_0}(O)|X \in A^\pi_k] - \Psi^\pi_k=0$ for all $k$, so (not surprisingly)
\begin{equation}
\hat{\Psi}^\pi_k=\frac{1}{|\{i: X_i \in A^\pi_k\}|}\sum_{i: X_i \in A^\pi_k} D_{\Psi, \mathbbm{P}_0}(O_i)
\end{equation}
is the oracle unbiased estimator for $\Psi^\pi_k$. 
It must also be the unique efficient estimator, since if it was not, and there existed a regular estimator with lower asymptotic variance, say $\tilde{\Psi}^\pi_k$ for each $k$, the estimator $\tilde{\Psi}= \sum^{K^\pi}_{k=1}\frac{|\{i: X_i \in A^\pi_k\}|}{n} \tilde{\Psi}^\pi_k$ would be the efficient estimator for the population mean, which is a contradiction since $\hat{\Psi}$ is the unique asymptotically efficient estimator.  
\end{proof}

\begin{remark}[Relationship to IF-calculus]
Albeit more involved, similar results can be obtained also when deriving the EIF-based estimator using rules of simple calculus. Consider the simple case of the conditional mean, i.e. $\psi(X)=\mathbbm{E}_{\mathbbm{P}_0}[Y|X=x]$. Since $\Psi_{A^\pi_k} = \mathbbm{E}_{
\mathbbm{P}_0}[Y|X \in A^\pi_k] = \frac{\mathbbm{E}_{\mathbbm{P}_0}[\mathbbm{1}\{X \in A^\pi_k\}Y]}{\mathbbm{E}_{\mathbbm{P}_0}[\mathbbm{1}\{X \in A^\pi_k\}]}$, the efficient influence function of the parameter  $\Psi_{A^\pi_k}$ can be shown to be $\dot{\Psi}_{A^\pi_k}= \frac{\mathbbm{1}\{X\in A^\pi_k\}}{\mathbbm{P}_0(X \in A^\pi_k)} \left( D_{\Psi, \mathbbm{P}_0}(O) - {\Psi}_{A^\pi_k}\right)$. When solving the estimating equation induced by setting $\sum^n_{i=1}\dot{\Psi}_{A^\pi_k}(O_i)=0$, we find that $\hat{\Psi}^\pi_k = \frac{\sum^n_{i=1}\frac{\mathbbm{1}\{X_i \in A^\pi_k\}}{\mathbbm{P}_0(X \in A^\pi_k)} D_{\Psi, \hat{\mathbbm{P}}}(O_i)}{\sum^n_{i=1}\frac{\mathbbm{1}\{X_i \in A^\pi_k\}}{\mathbbm{P}_0(X\in A^\pi_k)} }$. Estimating $\mathbbm{P}_0(X \in A^\pi_k)$ using the empirical distribution $\mathbbm{P}_n=\frac{1}{n} \sum^n_{i=1} \delta_{X_i}$ gives $\hat{\mathbbm{P}}(x \in A^\pi_k)= \frac{1}{n} \sum^n_{i=1} \mathbbm{1}\{x \in A^\pi_k\}$, and substitution leads to the same expression as in Proposition \ref{partprop}.
\end{remark}

\subsection{Regularity conditions for convergence in pseudo-outcome regression (Proposition 2)}\label{condthm1}

We assume the following pseudo-outcome regression set-up, as implied by Proposition 2 with standard regularity conditions as characterised by \cite{stone1980optimal}:  let $(D_\Psi, X)$ denote a pair of random random variables, with $D_\Psi\equiv D_{\Psi, \mathbbm{P}_0}(O)$ real-valued and $X \in \mathcal{X} \subseteq \mathbbm{R}^d$, where $\mathcal{X}$ denotes an open neighbourhood  around the origin. Let $\mathcal{G}_p$ denote the set of $p-1$ times continuously differentiable real valued functions $g$ on $\mathbbm{R}^d$ with $p$ bounded derivatives.  Assume $\psi_f$ is a fixed function, and consider the set of parameters $\mathcal{P} = \{\psi_f + g: g \in \mathcal{G} \}$, of which the true pseudo-outcome regression function $\psi(x)=\mathbbm{E}_{\mathbbm{P}_0}[D_\Psi|X=x]$ is an unknown member. Assume that the conditional variance of $D_\Psi$ is bounded on $\mathcal{X}$,  and that the density of $X$ is absolutely continuous and bounded away from zero and one on $\mathcal{X}$. The conditional distribution is assumed to be of the form $f(d_\Psi|x, \psi(x))\phi(dd_\Psi)$ with $\phi$ a measure on $\mathbbm{R}$. It is assumed that $f(d_\Psi|x, t)$ is strictly positive and jointly measurable in its arguments as they vary over their support. Further, $\int d_\Psi f(d|x, t) \phi(dd_\Psi) = \mathbbm{E}[d_\Psi|x, t]=t$ for $t$ in the open interval containing $\{\psi(x): \psi \in \mathcal{M} \text{ and } x\in \mathcal{X}\}$, and $f$ is twice continuously differentiable on the domain and fulfils some additional technical conditions on the log-likelihood. Then, the optimal minimax rate attainable for estimating $\psi(x)$ by generic nonparametric regression is $n^{-p/(2p+d)}$

\subsection{Result for error bounds for pseudo-outcome regression due to \cite{kennedy2020optimal}}\label{kennedythm}

Due to the importance of \cite{kennedy2020optimal}'s result for our rate-discussions, we restate the necessary assumptions as well as the theorem in slightly adapted format below:

\begin{assumption}\label{kennedyass} \textbf{Regularity of regression estimators}\\
We need mild two mild assumptions on the regularity of our second-stage regression estimators $\hat{\mathbbm{E}}_n$. $\hat{\mathbbm{E}}_n$ needs to satisfy that:
\begin{enumerate}
\item $\hat{\mathbbm{E}}_n(Y|X=x) + c = \hat{\mathbbm{E}}_n(Y+c|X=x)$ for any constant c
\item If $\mathbbm{E}[Y|X=x]= \mathbbm{E}[W|X=x]$ then 
\begin{equation*}
	\mathbbm{E}\left[\{\hat{\mathbbm{E}}_n[W|X=x] - \mathbbm{E}[W|X=x]\}^2\right] \asymp\mathbbm{E}\left[\{\hat{\mathbbm{E}}_n[Y|X=x] - \mathbbm{E}[Y|X=x]\}^2\right]
\end{equation*}
\end{enumerate}
\end{assumption}

\begin{otherthm} \textbf{Error bound for pseudo-outcome regression (Theorem 1, \cite{kennedy2020optimal})}\\
Let $O_0^n = ({O_{01}, \ldots, O_{0n}})$ and $O_1^n = ({O_{11}, \ldots, O_{1n}})$ denote two independent training and test samples, respectively,  which are all sampled from the same model $\mathbbm{P}_0$. Let $\hat{f}(o) = \hat{f}(o; O_0^n)$ be an estimate of a function $f(o)$ using only the training data $O_0^n$ and define $m(x)\equiv \mathbbm{E}_{\mathbbm{P}_0}[f(O)|X=x]$
Let $\hat{\mathbbm{E}}_n(Y|X=x)$ denote a generic estimator of the regression function $\mathbbm{E}[Y|X=x]$ using the test data $O_1^n$, where $(X_{1i}, Y_{1i}) \subseteq O_{1i}\text{, } i=1, \ldots, n$. Assume the estimator $\hat{\mathbbm{E}}_n$ satisfies
\begin{enumerate}
\item $\hat{\mathbbm{E}}_n(Y|X=x) + c = \hat{\mathbbm{E}}_n(Y+c|X=x)$ for any constant c
\item If $\mathbbm{E}[Y|X=x]= \mathbbm{E}[W|X=x]$ then 
\begin{equation*}
	\mathbbm{E}\left[\{\hat{\mathbbm{E}}_n[W|X=x] - \mathbbm{E}[W|X=x]\}^2\right] \asymp\mathbbm{E}\left[\{\hat{\mathbbm{E}}_n[Y|X=x] - \mathbbm{E}[Y|X=x]\}^2\right]
\end{equation*}
\end{enumerate}
 Let $\hat{m}(x) = \hat{\mathbbm{E}}_n[\hat{f}(O)|X=x]$ denote the regression of $\hat{f}(O)$ on the test samples, and let $\tilde{m}(x) =  \mathbbm{E}_n [f(O)|X=x]$ denote the corresponding oracle regression of $f(O)$ on $X$. Then we have that:
\begin{equation}
\mathbbm{E}\left[\{\hat{m}(x) - {m}(x)\}^2\right] \lesssim \mathbbm{E}\left[\{\tilde{m}(x) - {m}(x)\}^2\right] + \mathbbm{E}\{\hat{r}(x)^2\}
\end{equation}
where $\hat{r}(x)=\hat{r}(x;Z_0^n)\equiv \mathbbm{E}[\hat{f}(O)|X=x, O_0^n]-m(x)$.

For i.i.d. data, two independent samples can be arrived at by randomly splitting the sample in half. Further, the same bound holds when using cross-fitting (see e.g. \cite{chernozhukov2018double}) instead. 
\end{otherthm}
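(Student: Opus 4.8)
The plan is to argue conditionally on the first-stage sample $O_0^n$, so that $\hat{f}(\cdot)=\hat{f}(\cdot;O_0^n)$ and the function $\bar{f}(x):=\mathbbm{E}[\hat{f}(O)\mid X=x,O_0^n]=m(x)+\hat{r}(x)$ become deterministic, taking the outer expectation over $O_0^n$ only at the very end; since every constant below is absolute, that last step is harmless. The backbone is the identity
\[
\hat{m}(x)-m(x)=\big\{\hat{\mathbbm{E}}_n[\hat{f}(O)\mid X=x]-\bar{f}(x)\big\}+\hat{r}(x),
\]
which uses nothing but $m(x)=\bar{f}(x)-\hat{r}(x)$. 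Squaring and applying $(a+b)^2\le 2a^2+2b^2$,
\[
\mathbbm{E}\big[\{\hat{m}(x)-m(x)\}^2\big]\le 2\,\mathbbm{E}\big[\{\hat{\mathbbm{E}}_n[\hat{f}(O)\mid X=x]-\bar{f}(x)\}^2\big]+2\,\mathbbm{E}[\hat{r}(x)^2],
\]
so the whole problem reduces to bounding the first right-hand term — the mean-square error incurred when the deterministic outcome $\hat{f}(O)$ is regressed on $X$ to recover \emph{its own} conditional mean $\bar{f}$ — by the oracle risk $\mathbbm{E}[\{\tilde{m}(x)-m(x)\}^2]$.

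For that I would first match conditional means. Set $V:=f(O)+\hat{r}(X)$, so that $\mathbbm{E}[V\mid X=x]=m(x)+\hat{r}(x)=\bar{f}(x)$; thus $\hat{f}(O)$ and $V$ share a regression function (equivalently $\hat{f}(O)=V+\xi(O)$ with $\xi(O):=\hat{f}(O)-f(O)-\hat{r}(X)$ conditionally centered, $\mathbbm{E}[\xi(O)\mid X=x,O_0^n]=0$). Applying part~(2) of Assumption~\ref{kennedyass} with $W=\hat{f}(O)$ and $Y=V$ — legitimate conditionally on $O_0^n$, since given $O_0^n$ the test pairs $(X_{1i},\hat{f}(O_{1i}))$ and $(X_{1i},V_i)$ are i.i.d.\ with common regression function $\bar{f}$ — gives
\[
\mathbbm{E}\big[\{\hat{\mathbbm{E}}_n[\hat{f}(O)\mid X=x]-\bar{f}(x)\}^2\big]\asymp\mathbbm{E}\big[\{\hat{\mathbbm{E}}_n[V\mid X=x]-\bar{f}(x)\}^2\big].
\]

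It then remains to peel the $\hat{r}(X)$ shift off $V$. Using part~(1) of Assumption~\ref{kennedyass} together with the linear-smoother/stability structure those regularity conditions encode — precisely the ``stability of the second-stage regression'' hypothesis referenced in the main text — the regression of $f(O)+\hat{r}(X)$ splits as $\tilde{m}(x)+\hat{\mathbbm{E}}_n[\hat{r}(X)\mid X=x]$ with $\tilde{m}(x)=\hat{\mathbbm{E}}_n[f(O)\mid X=x]$, and a stable smoother reproduces the conditioned function $\hat{r}$ up to an error that is, in mean square, of the order of $\mathbbm{E}[\hat{r}(x)^2]$; hence
\[
\mathbbm{E}\big[\{\hat{\mathbbm{E}}_n[V\mid X=x]-\bar{f}(x)\}^2\big]\lesssim\mathbbm{E}\big[\{\tilde{m}(x)-m(x)\}^2\big]+\mathbbm{E}[\hat{r}(x)^2].
\]
Chaining the displays and taking the outer expectation over $O_0^n$ yields the claimed bound. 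For the final sentence, an i.i.d.\ sample splits into two i.i.d.\ halves so the argument transfers verbatim, and a cross-fitted estimator, being an average of $K$ fold-wise estimators, inherits the bound by convexity of $u\mapsto u^2$, as in \cite{chernozhukov2018double}.

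I expect the genuine difficulty to lie in the last step for \emph{generic}, not necessarily linear, regression estimators $\hat{\mathbbm{E}}_n$: one must verify that adding the deterministic but possibly rough function $\hat{r}(X)$ to the outcome neither destabilises the estimator nor produces a response larger, in mean square over $x$, than a fixed multiple of $\mathbbm{E}[\hat{r}(x)^2]$. This is exactly what the two regularity conditions of Assumption~\ref{kennedyass} (plus the mild extra second-stage stability condition) are tailored to buy; granted those, the rest is routine bookkeeping — in particular, checking that the constants hidden in $\lesssim$ and $\asymp$ do not depend on $O_0^n$, so that they survive the outer expectation.
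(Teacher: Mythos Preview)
The paper does not supply its own proof of this statement: it is restated verbatim (``in slightly adapted format'') from \cite{kennedy2020optimal} in Appendix~\ref{kennedythm} purely for reference, with no argument attached. So there is no in-paper proof to compare against; the benchmark here is Kennedy's original.

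On the merits of your sketch: the backbone is sound and matches Kennedy's strategy --- condition on $O_0^n$, write $\hat m(x)-m(x)=\{\hat{\mathbbm E}_n[\hat f(O)\mid X=x]-\bar f(x)\}+\hat r(x)$, square, and reduce to controlling the first bracket by the oracle risk. Your use of Assumption~\ref{kennedyass}(2) to swap $\hat f(O)$ for $V=f(O)+\hat r(X)$ (same conditional mean $\bar f$) is also the right move.

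Where you are candid about a gap, there really is one: the ``peeling'' step $\hat{\mathbbm E}_n[f(O)+\hat r(X)\mid X=x]\approx \tilde m(x)+\hat r(x)$ is not delivered by Assumption~\ref{kennedyass} as written. Condition~(1) is shift-equivariance for \emph{constants} only, so it does not let you pull the $X$-dependent shift $\hat r(X)$ through $\hat{\mathbbm E}_n$; and condition~(2) requires the two outcomes to share the \emph{entire} regression function, not merely its value at the query point $x$. The extra ``linear-smoother/stability'' structure you invoke is precisely what is needed but is not among the stated hypotheses --- you have correctly located the crux, but you have not discharged it from (1)--(2) alone. Since the present paper defers the argument to \cite{kennedy2020optimal}, the resolution of this step is to be found there rather than here.
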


\subsection{Implications of doubly robust remainders for CATE estimation}\label{CATEImpl}
For the case of CATE estimation, \cite{kennedy2020optimal} (Theorem 2) showed that
\begin{equation*}
\{\mathbbm{E}[\bar{\psi}(x)|X=x, O_0^n] - \psi(x)\}^2 \leq \left(\frac{2}{\epsilon}\right)\{\pi(x) - \hat{\pi}(x)\}^2\left[\{\mu_1(x)-\hat{\mu}_1(x)\}^2 + \{\mu_0(x)-\hat{\mu}_0(x)\}^2\right]
\end{equation*}
where $\epsilon < \hat{\pi}(x) < 1-\epsilon$. Then, (modulo constants) the expectation of the RHS in (\ref{errorpsi}) can be written as $\mathbbm{E}\left[\{\hat{\pi}(x) - \pi(x)\}^2\right]\sum^1_{w=0}\mathbbm{E}\left[\{\mu_w(x)-\hat{\mu}_w(x)\}^2\right]$, if the propensity score and the outcome regressions are fit on separate samples \citep{kennedy2020optimal}. 

For a simple plug-in estimator, $\hat{\mu}_1 - \hat{\mu}_0$, on the other hand, the remainder is given by
\begin{equation*}
 (\hat{\mu}_1(x) - \hat{\mu}_0(x) - {\mu}_1(x) - {\mu_0}(x))^2 \leq 2 * \left( \{\mu_1(x)-\hat{\mu}_1(x)\}^2 + \{\mu_0(x)-\hat{\mu}_0(x)\}\right)^2
\end{equation*}
as $(a+b)^2 \leq 2 (a^2 + b^2)$.

Because the product term with the propensity score is missing, the rate of convergence of this plug-in estimator will be dominated by the slower-converging of the two regression terms, while the convergence of the IF-based estimator is dominated by the slower of the oracle rate and the remainder term. Since the the remainder term always decays faster than the slower of the two regression terms (since it is interacted with the propensity score term), the IF-based estimator is preferable over the plug-in estimator whenever CATE is a simpler function than the more complex of the two potential outcomes functions.

\section{Comparison of orthogonalised regression estimators and AIPW estimators for treatment effect estimation}\label{wadiscussion}
In this note we briefly compare the properties and assumptions underlying orthogonalised regression estimators of treatment effects (as used in, for example, \cite{Athey2019}'s causal trees and \cite{Chernozhukov2018}'s algorithms) and EIF-based AIPW estimators (which are the basis of this paper as well as, e.g. \cite{kennedy2020optimal} and \cite{lee2017doubly}). 
\subsection{Orthogonalised regression estimators}
Orthogonalised regression estimators appear mainly in the econometrics literature and naturally follow from \cite{robinson1988root}'s root-n consistent approach to semiparametric regression in the model
\begin{equation}\label{semireg}
    \mathbbm{E}_{\mathbbm{P}}[Y|W, X] = \beta'W + \eta(X) 
\end{equation}
where $(Y, W, X) \in \mathbbm{R} \times \mathbbm{R}^p \times \mathbbm{R}^q$, $X$ and $W$ are non-overlapping sets of covariates (and $W$ is not perfectly predictable from $X$) and $\beta$ is a regression parameter of interest while $\eta(X)$ has unspecified form. 
\cite{robinson1988root} and \cite{chernozhukov2018double} show that no-intercept ordinary least squares regression (OLS) in the residualised/orthogonalised model
\begin{equation}\label{orthoreg}
    Y- \mathbbm{E}_{\mathbbm{P}}[Y|X] = \beta'(W-  \mathbbm{E}_{\mathbbm{P}}[W|X]) + U
\end{equation}
with error-term $U$ induces a semiparametrically efficient estimator for $\beta$ if $\mathbbm{E}_\mathbbm{P}[U|W, X]=0$ (the errors are exogenous), the model \ref{semireg} is correctly specified and the error-terms $U$ are homoskedastic. 

The orthogonalised local regression estimator in \cite{Athey2019} for CATE $\tau(x)$ of treatment $W \in \{0,1\}$ is based on a local version of (\ref{orthoreg}), using a local estimation equation of the form 
\begin{equation}\label{eeathey}
    \arg \min_{\tau(x)} \sum^n_{i=1} \alpha_x(X_i)((Y_i - \hat{\mu}(X_i)) - (W_i - \hat{\pi}(X_i))\tau(x))^2
\end{equation}
with forest-based kernel-weights $\alpha_x(X_i)$ and forest-based out-of-bag estimates $\hat{\mu}(x) =\hat{\mathbbm{E}}[Y|X=x]$ and $\hat{\pi}(x)=\hat{\mathbbm{E}}[W|X=x]$. By analogy with (\ref{orthoreg}) this estimator is efficient if (i) the local treatment effect model is correctly specified, i.e. the treatment effect is additive as in semiparametric regression and $\tau(x)=\tau$ is (approximately) constant in a neighborhood around $x$ and (ii) error terms are locally homoskedastic. 
\subsection{AIPW estimators}
In this paper, we consider a very general model of treatment effects, which is less restrictive than that implied by semiparametric regression -- namely we let $\mathbbm{E}_\mathbbm{P}[Y|X, W]=f(X, W)$ so that treatment effects are not necessarily additive and error terms could be heteroskedastic. As discussed in section \ref{sectionrct}, the AIPW estimator for the ATE is given by
\begin{equation*}
D_{\Psi, \mathbbm{P}}(O)=  \left(\frac{W}{\pi(X)}- \frac{(1-W)}{1-\pi(X)}\right) Y + \left[\left(1 - \frac{W}{\pi(X)}\right) \mu_1(x)-\left(1 - \frac{1-W}{1-\pi(X)}\right)\mu_0(X)\right]
\end{equation*}
This estimator is the nonparametrically efficient estimator and thus efficient when the assumptions listed above do not hold \citep{chernozhukov2017double}. 

As we have shown in section \ref{maincontrib}, when used as our pseudo-outcome, this estimator approaches an approximately efficient nonparametric estimator for CATE. By using \cite{Athey2019}'s random forests for estimation of our pseudo-outcome regression model, we are solving the local estimation equation
\begin{equation}
    \arg \min_{\tau(x)} \sum^n_{i=1} \alpha_x(X_i)(D_{\Psi, \hat{\mathbbm{P}}}(O)-\tau(x))^2
\end{equation}
where the dependence on $\hat{\mathbbm{P}}$ signifies that we rely on forest-based out-of-bag estimates for the nuisance parameters $\pi(x)$ and $\mu_w(x)\text{, } w \in \{0,1\}$.

\subsection{Implications and expectations for empirical performance}
In terms of asymptotic (large sample) performance, the discussion in the previous two sections allows to make straightforward predictions for the relative performance of CATE-estimators based on orthogonalised regression and AIPW: if the assumptions on data-generating process (model specification and homoskedasticity) hold, then CATE-estimators based on orthogonalised regression are asymptotically efficient because some structure is correctly specified, reducing the difficulty of the estimation problem. When considering finite (small) sample performance, we expect that there is a tradeoff between two factors which may impact the relative empirical performance: On the one hand, the orthogonalised regression estimator needs one less plug-in estimator as it relies on only $\mu(x) = \mathbbm{E}_\mathbbm{P}[Y|X=x]$ instead of the two regression functions $\mu_w(x)$, which may result in better  finite sample performance. On the other hand, the smaller the sample, the less likely that the causal forest can identify neighborhoods in which the linearly additive and locally constant treatment effects specification encoded in (\ref{eeathey}) holds. 

\section{Experimental set-up}
\subsection{Simulation study 1}\label{expsetup}
\subsubsection{Data generating processes}
In our experiments, we revisit the example setting used in \cite{kennedy2020optimal}, which is based on a difficult piecewise polynomial baseline effect function $\mu_0(x)$ from \cite{gyorfi2006distribution}, while the treatment effect $\tau(x)=\tau=0$ is not only constant but also zero. This set-up illustrates very well how plug-in bias affects data-adaptive target function estimates even when the data is only one-dimensional.

Thus, we use the following piecewise polynomial baseline model specification:
\begin{equation}
\begin{split}
\mu_0(x) = 0.5 \times \mathbbm{1}\{x \leq -0.5\}(x+2)^2 + (x/2 -0.875) \times \mathbbm{1}\{-0.5<x\leq 0\} + \\\mathbbm{1}\{0<x\leq 0.5\}(-5(x-0.2)^2+1.075) + \mathbbm{1}\{x > 0.5\}(x + 0.125)
\end{split}
\end{equation}

For continuous outcomes, we generate observations $Y_i$ using the same model as \cite{kennedy2020optimal}: We simulate inputs
\begin{equation}\label{simu}
\begin{split}
X_i \sim Unif([-1, 1])\\
W_i \sim Ber(\pi(X_i))
\end{split}
\end{equation}
and outcomes as
\begin{equation}
Y_i = W_i * \tau(X_i) + \mu_0(X_i) + \epsilon(X_i)
\end{equation}
with 
\begin{equation*}
\epsilon(X_i) \sim \mathcal{N}(0, 0.2 -0.1\times cos(2\pi\times X_i))
\end{equation*}

Further, we use different degrees of selection bias as represented by the propensity score $\pi(X_i)$. For the first setting in Figures \ref{experiment1} and \ref{experiment2}, we use $\pi(x) = \pi = 0.5$, i.e. full randomization. For the second setting in Figures \ref{experiment1} and \ref{experiment2}, as well as the results presented in Table 2, we use
\begin{equation}
\pi(x) = 0.1 + 0.8*\mathbbm{1}\{x>0\}
\end{equation}
the propensity score used by \cite{kennedy2020optimal}. For the final setting in Figure \ref{experiment1}, we use
\begin{equation}
\pi(x) = 0.5 + 0.5 \times b \times \frac{|x|}{2}
\end{equation}
i.e. there is selection bias that selects individuals with higher absolute values in covariates into treatment, which gets stronger as $b$ increases in $[0, 1)$, but feed the model a propensity score of $0.5$ as an input.

In Figure \ref{examplesim} we illustrate the most extreme selection bias setting used in this paper (setting 2), which corresponds to the setting used in \cite{kennedy2020optimal}.

\begin{figure}[!htb]
\centering
\includegraphics[width=0.75\textwidth]{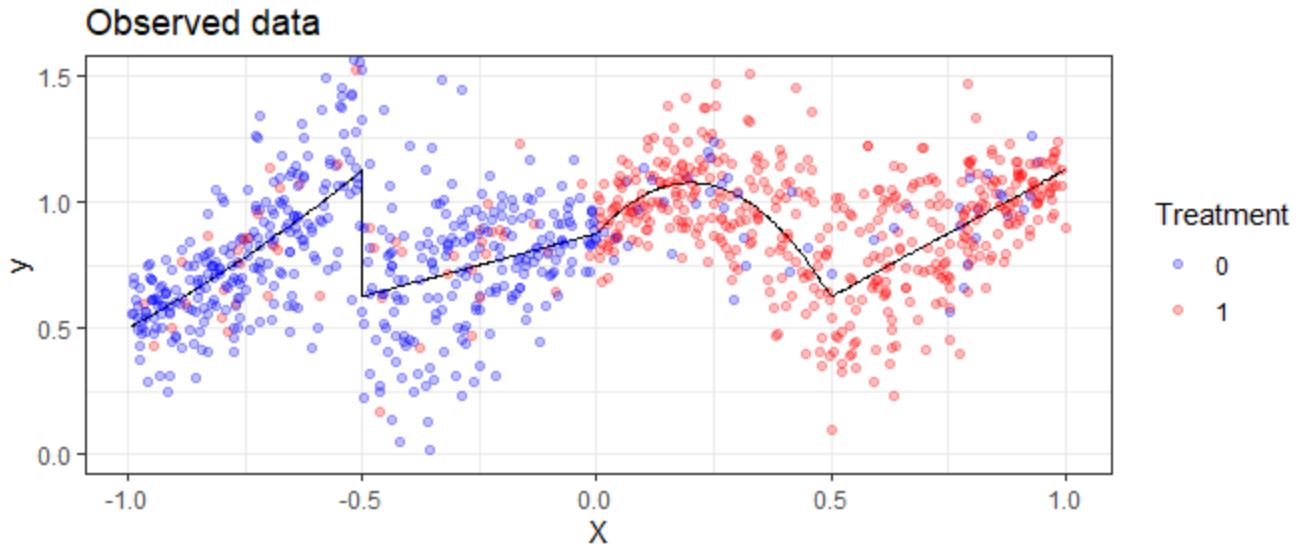}
\caption{Simulated observed data (dots) and underlying baseline function (black line) for the second setting, for $n=1000$}\label{examplesim}
\end{figure}

For the binary outcome data, we keep the baseline simulations (\ref{simu}) but for outcome simulations instead use 
\begin{equation}
Y_i \sim {Ber}\left(\frac{W_i * \tau(X_i) + \mu_0(X_i)}{1.5}\right) 
\end{equation}
where scaling by $1.5$ is necessary to ensure probabilities between $0$ and $1$.

\subsubsection{Estimation methods}
For all settings, we use the same generic default estimator: whenever the outcome variable is continuous, we use an adaptive smoothing spline, as implemented in base R by the function `smooth.spline'. Whenever the outcome is binary, we use an adaptive LogisticGAM based on a natural cubic spline basis, using the default implementation in the python package PyGAM \citep{pygam}. Here, adaptive indicates that the spline smoothness penalty $\lambda$ is chosen adaptively while fitting, by the default implementation in `smooth.spline' and by grid search in LogisticGAM. 

We present results that are averaged over 500 independent simulations using the same DGP, where the MSE is computed on a test-set of 1000 independent hold-out observations in each simulation. When training on very small samples, boundary bias sometimes resulted in highly unusual values for the MSE. Therefore, runs resulting in a MSE $>1000$ were discarded for all methods, to avoid contamination of the mean MSE, which was typically below 1. 

\subsection{Simulation study 2}\label{expsetup2}
For the second simulation study based on the toy-examples in \cite{Athey2019}, we simulate inputs as:
\begin{equation}\label{simu2}
\begin{split}
X_i \sim Unif([0, 1]^{10})\\
W_i \sim Ber(\pi(X_i))
\end{split}
\end{equation}
and outcomes as
\begin{equation}
Y_i = W_i * \tau(X_i) + \mu_0(X_i) + \epsilon_i \text{ with }
\epsilon_i \sim \mathcal{N}(0, 1)
\end{equation}
with $\pi(x)$, $\mu_0(x)$ and $\tau(x)$ varying as given in Table \ref{atheyres}. Regression forests and causal forests are implemented using the R package \url{grf} \citep{Athey2019}.

\end{document}